\newcommand{\tabincell}[2]{\begin{tabular}{@{}#1@{}}#2\end{tabular}}
\newtheorem{theorem}{Theorem}
\newtheorem{lemma}{Lemma}
\newtheorem{remark}{Remark}
\newtheorem{definition}{Definition}
\newtheorem{proposition}{Proposition}
\newtheorem{assumption}{Assumption}
\newcolumntype{y}[1]{>{\centering\let\newline\\\arraybackslash\hspace{0pt}}p{#1}}
\g@addto@macro{\endabstract}{\@setabstract}
\newcommand{\authorfootnotes}{\renewcommand\thefootnote{\@fnsymbol\c@footnote}}%
\begin{document}
\begin{center}
  \LARGE 
  Decentralized Secure State-Tracking \\in Multi-Agent Systems\footnote{This work was funded in part by the Army Research Laboratory under Cooperative Agreement W911NF-17-2-0196 and in part by NSF grant 1705135.} \par \bigskip

  \normalsize
  \authorfootnotes
  Yanwen Mao and Paulo Tabuada \par \bigskip

  Department of Electrical and Computer Engineering, \\ University of California, Los Angeles \\ Email: yanwen.mao@g.ucla.edu, tabuada@ucla.edu \par
 \bigskip

  \today
\end{center}

\begin{abstract}
This paper addresses the problem of decentralized state-tracking in the presence of sensor attacks. We consider a network of nodes where each node has the objective of tracking the state of a linear dynamical system based on its measurements and messages exchanged with neighboring nodes notwithstanding some measurements being spoofed by an adversary. We propose a novel decentralized attack-resilient state-tracking algorithm based on the simple observation that a compressed version of all the network measurements suffices to reconstruct the state. This motivates a 2-step solution to the decentralized secure state-tracking problem: (1) each node tracks the compressed version of all the network measurements, and (2) each node asymptotically reconstructs the state from the output of step (1). We prove that, under mild technical assumptions, our algorithm enables each node to track the state of the linear system and thus solves the decentralized secure state-tracking problem. 
\end{abstract}


\section{Introduction}
\label{sec:intro}

Over the past decade the topic of decentralized state-tracking has received considerable attention, due to the increasingly decentralized nature of complex systems such as traffic networks and power grids. In this problem, a group of nodes is required to collectively track the state of a linear dynamical system using measurements from their own sensors and messages exchanged with neighboring nodes via a communication network. 

In the absence of attacks, the decentralized state-tracking problem has been well studied~\cite{han2018simple,kim2016distributed,wang2017distributed,mitra2018distributed}. However, reports on cyber-physical systems attacks~\cite{farwell2011stuxnet,WinNT} remind us of how vulnerable these systems can be. Motivated by this consideration, in this paper we tackle a more challenging version of the decentralized state-tracking problem where some nodes are subject to sensor attacks spoofing its measurements. We refer to this problem as the Decentralized Secure State-Tracking (DSST) problem.

\subsection{Related Work}
Closely related to the DSST problem is the Secure State-Reconstruction (SSR) problem whose formal definition can be found in~\cite{mao19}. Roughly speaking, in the SSR problem a central server, which has access to all measurements, is asked to reconstruct the state of a linear system despite an attack on some sensors. A preliminary version of the SSR problem was first considered in~\cite{sandberg2010security}, and rigorously defined in~\cite{fawzi2011secure}. Moreover, the solvability of the SSR problem was settled in~\cite{fawzi2011secure,shoukry2015event,chong2015observability}. In particular, in~\cite{fawzi2011secure} and~\cite{chong2015observability} it is pointed out that in order to reconstruct the state in the presence of $s$ attacked sensors, the linear system must remain observable after a removal of any subset of $2s$ sensors. This property of a linear system is referred to as $2s-$sparse observability in~\cite{shoukry2015event}. 

Although it has been known for long that the SSR problem is NP-hard~\cite{Hamza14}, much progress 
was reported on reducing the computational complexity of solving the SSR problem. Many works, such as \cite{Hamza14,shoukry2015event,mitra2019byzantine,mao2021computational}, carve out subsets of the SSR problem instances which allow for a polynomial-time solution. More results on reducing the computational complexity of the SSR problem can be found in \cite{yong2016robust} and \cite{han2019convex}. In particular, the solution of the SSR problem proposed in~\cite{shoukry2018smt}, which is built upon satisfiability modulo theory, is of the utmost practical interest due to its good performance in simulations. We show in this paper that, with our algorithm, any DSST problem can be reduced to an SSR problem thereby enabling the use of any of the aforementioned algorithms to solve the DSST problem.

Compared with the SSR problem, the DSST problem is much more complicated since there is no longer a central server thus implying that each node only has partial information obtained via its own sensors and messages exchanged with neighboring nodes. Although of significant importance, the current understanding of this more challenging DSST problem is limited. To the best of the authors’ knowledge, only three papers addressed the DSST problem. Moreover, they only solve a subset of the DSST problem since they make assumptions either on system dynamics, network, or both. For example, both~\cite{Distributed1} and \cite{mitra2018distributed} make assumptions on system dynamics. In particular, they require the existence of a basis such that the unobservable space of all the (or enough) sensors is the span of a subset of this basis. We will refer this property as Scalar Decomposability, or SD in brief. Intrinsically, SD enables one to decompose a DSST problem into multiple sub-problems each associated with a scalar system. Interestingly, \cite{Distributed1} and \cite{mitra2018distributed} exploit SD in different ways which leads to different types of solutions of the DSST problem. In~\cite{Distributed1}, with the help of SD, the DSST problem is formulated as a distributed convex optimization problem with time-varying loss function and
a high-gain observer is proposed to reconstruct the state with the help of “blended dynamics approach” introduced in~\cite{lee2020tool}.
In contrast, \cite{mitra2018distributed} proposes a local filter which forces the estimate of an attack-free sensor to always lie in the convex hull of the estimates of its attack-free neighbors. Although this strategy allows extensions to defend against more powerful classes of attacks, it also places an additional assumption on the attacker capabilities by requiring that it does not attack too many neighbors of each network node. Moreover, the tracking algorithm proposed in~\cite{Distributed1} has steady-state error. Compared with~\cite{mitra2018distributed} and~\cite{Distributed1}, the state observer proposed in~\cite{Distributed2} does not require SD, but it is still based on an assumption involving both system dynamics and network topology so it still only solves a subset of DSST problem instances. Moreover, it requires a communication frequency much higher than the sampling rate, which is not typical in applications.

\subsection{Our Approach and Contributions}
In this paper, we study the DSST problem from a new perspective by relating it to the consensus problem. According to this perspective, the objective of each node is to reach consensus on the state of the system. Since the state evolves over time, the relevant type of consensus is the dynamic average consensus. A thorough literature review of the dynamic average consensus problem is provided in the tutorial paper \cite{Tutorial} to which we refer all the interested readers for a discussion of the relevant literature, including~\cite{spanos2005dynamic},~\cite{kia2015dynamic}, and~\cite{Bai}.

Our solution of the DSST problem is based on the simple observation that instead of processing all measurements from all nodes in the network, a suitably compressed version of the measurements suffices for each node to reconstruct the state. In particular, in Section~\ref{sec:DAC} we will show that reconstructing the state from compressed measurements can be formulated as a special case of the SSR problem and, hence, any algorithm that solves the most general case of the SSR problem can be used to solve the DSST problem. Therefore, a solution of the DSST problem is obtained provided each node can track the compressed version of the measurements. To achieve this goal, we draw inspiration from~\cite{Bai} and design an observer that provides each node with error-free tracking of the compressed measurements.

We make the following contributions in this paper:
\begin{enumerate}
    \item We propose a necessary and sufficient condition for the DSST problem to be solvable. 
    \item We provide a solution to the most general case of the DSST problem.
\end{enumerate}

Compared with \cite{Distributed1}, our solution of the DSST problem does not require SD, and the tracking of the state is error-free. Compared with~\cite{Distributed2}, we do no rely on unnecessary assumptions regarding system dynamics or network topology, and we adopt the widely-accepted setting where the communication rate equals the sampling rate. 
The major disadvantage of our algorithm is its computational complexity. Our algorithm is combinatorial in the most general case. Although our algorithm runs in polynomial time when SD holds, it still requires more computations than the algorithms proposed in~\cite{mitra2018distributed} and~\cite{Distributed1}. Moreover, our solution of the DSST problem requires sufficiently fast sampling rates. We summarize the comparison\footnote{Note that the computational complexity is measured by the number of additions and multiplications needed by the algorithm during one round of state update. Although the results in~\cite{Distributed1} are based on a continuous-time representation of the system, computational complexity is assessed for its discrete-time version.} between our solution of the DSST problem and the solutions in~\cite{mitra2018distributed},~\cite{Distributed1}, and~\cite{Distributed2}, in Table~1. 

\begin{table*}[h]
\label{table:intro}
\centering
\begin{tabular}{|y{125pt}||y{115pt}|y{85pt}|}
\hline
 &\cite{mitra2018distributed} & \cite{Distributed1} \\ \hline
Constraints & \tabincell{c}{SD and \\ topology constraints} & SD\\ \hline
Computational complexity & \diagbox{}{} & \diagbox{}{} \\ \hline
\tabincell{c}{Computational complexity \\ if SD holds} & Polynomial & Polynomial \\ \hline
Convergence & Asymptotic & \diagbox{}{} \\ \hline
\tabincell{c}{Existence of \\ steady-state error?} & No & Yes\\ \hline
Working criterion & \tabincell{c}{Continuous and \\ Discrete-time} & Continuous-time \\ \hline
\end{tabular}
\newline
\vspace{0.5 mm}
\newline
\begin{tabular}{|y{125pt}||y{115pt}|y{85pt}|}
\hline
  & \cite{Distributed2} & Our algorithm\\ \hline
Constraints & \tabincell{c}{Constraints on \\ dynamics and topology} & No constraints\\ \hline
Computational complexity & Polynomial but high  & NP-hard \\ \hline
\tabincell{c}{Computational complexity \\ if SD holds} & Not specified & \tabincell{c}{Polynomial but \\ higher than~\cite{Distributed1}} \\ \hline
Convergence  & Asymptotic & Exponential\\ \hline
\tabincell{c}{Existence of \\ steady-state error?} & No & No \\
\hline
Working criterion & \tabincell{c}{High communication \\ frequency} & \tabincell{c}{Requirement on \\ sampling rate} \\
\hline
\end{tabular}
\hspace{1mm}
\caption{A comparison between solutions to the DSST problem in~\cite{mitra2018distributed}, \cite{Distributed1}, \cite{Distributed2}, and this paper.}
\end{table*}

A preliminary version of this work \cite{yanwenconf} was accepted for publication at the 60-th IEEE Conference on Decision and Control.

\subsection{Paper Organization}
The organization of the paper is as follows. We introduce some preliminaries and notations in Section~\ref{sec:notation} which will be used in later sections. The decentralized secure state-tracking problem is formulated in Section~\ref{sec:formulation}, including all the assumptions we made. In Section~\ref{sec:designD} we discuss how to compress the measurements and state a result regarding the solvability of the DSST problem. Then in Section~\ref{sec:reduc} we relate the DSST problem to the dynamic average consensus problem. We provide our solution to the DSST problem in Section~\ref{sec:DAC}. The paper concludes with Section~\ref{sec:conclusion}.

\section{Preliminaries and Notations}
\label{sec:notation}
In this section we introduce the notions used throughout the paper.

\subsection{Basic Notions}
We denote by $\vert S\vert$ the cardinality of a set $S$. For any two sets $S$ and $S'$, the set subtraction $S\backslash S'$ is the set defined by $S\backslash S'=\{s\in S|s\notin S'\}$.

Let $\mathbb{R},~\mathbb{N}$, and $\mathbb{C}$ denote the set of real, natural, and complex numbers, respectively. A complex number $z\in\mathbb{C}$ is written in the form $z=a+b\mathbf{i}$ where $\mathbf{i}^2=-1$. The support of $v\in\mathbb{R}^p$, denoted by $\text{supp}(v)$, is the set of indices of the non-zero entries of $v$, i.e., $\text{supp}(v)=\{i\in\{1,2,\dots,p\}|v_i\ne 0\}$. For a scalar $s\in\mathbb{N}$ we say $v$ is $s-$sparse if $|\text{supp}(v)|\leq s$. Also, we define the all-ones vector $\mathbf{1}_n=(1,1,\hdots,1)^T$ and $I_n$ to be the identity matrix of order $n$. When the dimension of the identity matrix is unambiguous, we will write $I$ instead. Moreover, for any $p\in\mathbb{N}$, we denote by $\mathbf{E}_p=\{\mathbf{e}_1,\mathbf{e}_2,\dots,\mathbf{e}_p\}$ the standard basis of $\mathbb{R}^p$ with $\mathbf{e}_i\in\mathbb{R}^p$ being the vector with all entries equal to zero except the $i$-th entry which is 1.

\subsection{Matrix Related Notions}
We denote by $A\in\mathbb{R}^{m\times n}$ a real matrix with $m\in\mathbb{N}$ rows and $n\in\mathbb{N}$ columns. We will also refer to matrices where only the number of rows or columns is specified using the notation $A\in \mathbb{R}^{m\times *}$ or $A\in \mathbb{R}^{*\times n}$.

For a real square matrix $A$, we denote by $\lambda_{\text{max}}(A)$ and $\lambda_{\text{min}}(A)$ the eigenvalue with the largest and smallest magnitude, respectively. Also, for a real matrix $B$, we denote by $\sigma_{\text{max}}(B)$ and $\sigma_{\text{min}}(B)$ the largest and smallest singular values of matrix $B$, respectively. The range of a real matrix $A$ is denoted by $\mathcal{R}(A)$, and its kernel is denoted by $\mathrm{ker}(A)$. Moreover, we denote by $A\otimes B$ the Kronecker product of two real matrices $A$ and $B$. Consider a set $Q$ of indices and a matrix $K$, the matrix $K_Q$ is obtained by removing any row in $K$ not indexed by $Q$.

\subsection{Graph Related Notions}
Here we review some of the basic notions of graph theory. A weighted undirected graph $\mathcal{G}=(\mathcal{V},\mathcal{E},\mathbf{A})$ is a triple consisting of a set of vertices $\mathcal{V}=\{v_1,v_2,\dots,v_p\}$ with cardinality $p$, a set of edges $\mathcal{E}\subseteq \mathcal{V}\times \mathcal{V}$, and a weighted adjacency matrix $\mathbf{A}\in \mathbb{R}^{p\times p}$ which we will define in the coming paragraph. The set of neighbors of a vertex $i\in\mathcal{V}$, denoted by $\mathcal{N}_i=\{j\in\mathcal{V}|(i,j)\in\mathcal{E}\}$, is the set of vertices that is connected to $i$ by an edge. To clarify, we assume each vertex is not a neighbor of itself, i.e., $(i,i)\notin \mathcal{E}$ for any $i$. The weighted adjacency matrix $\mathbf{A}$ of the graph $\mathcal{G}$ is defined entry-wise. The entry in the $i$-th row and $j$-th column, $a_{ij}$, satisfies $a_{ij}>0$ if $(i,j)\in\mathcal{E}$ and otherwise $a_{ij}=0$. Since the graph is undirected, $a_{ij}=a_{ji}$ for any $i,j$ ranging from 1 to $p$ which results in $\mathbf{A}$ being a symmetric matrix. The degree matrix $\mathbf{D}\in\mathbb{R}^{p\times p}$ of the graph $\mathcal{G}$ is a diagonal matrix with its $i$-th diagonal element defined by $d_{ii}=\sum_{j=1}^pa_{ij}$. The Laplacian matrix $\mathcal{L}$ of the graph $\mathcal{G}$ is defined by $\mathcal{L}=\mathbf{D}-\mathbf{A}$, which is known to be symmetric, if the graph is undirected, positive semi-definite, and having $\text{span}\{\mathbf{1}_p\}$ as its kernel.

\section{Problem Formulation and Key Idea}
\label{sec:formulation}
In this section we introduce the decentralized secure state-reconstruction problem.

\subsection{System Model}
We consider a linear time-invariant system monitored by a network of $p$ nodes whose sensors are subject to attacks:
\begin{equation}
\label{eqn:sys}
\begin{split}
 x[t+1] &=Ax[t],\\
   y_i[t] &= C_ix[t] + e_i[t],
\end{split}
\end{equation}
where $x[t]\in \mathbb{R}^n$ is the system state at time $t\in\mathbb{N}$, $y_i[t]\in\mathbb{R}$ is the measurement of node $i$ where $i\in P\triangleq \{1,2,\dots,p\}$, which is assumed to be a scalar, and the matrices $A$, $B$, and $C_i$ have appropriate dimensions.

The vector $e_i[t]\in\mathbb{R}$ models the attack on the sensor at node $i$ (which we will refer to as sensor $i$ for brevity). If sensor $i$ is attacked by an adversary, then $e_i[t]$ can be arbitrary, otherwise, $e_i[t]$ remains zero for any $t$, and $y_i[t]=C_ix[t]$ holds which means node $i$ receives correct measurements from its sensor. We also assume that the adversary is omniscient, i.e., the adversary has knowledge about the system model, the algorithm being executed at each node and, for any time slot $t$, the adversary knows the system state $x[t]$ and the measurements $y_i[t]$ from all nodes. The only assumption we make on the adversary is that it can only attack a fixed set of at most $s\in\mathbb{N}$ sensors. Note that this set of attacked sensors is unknown to any node in the network.

Collecting $n$ consecutive measurements over time, the output of sensor $i$ can be written in a more compact form:
\begin{equation}
Y_i[t]=\mathcal{O}_ix[t]+E_i[t],\quad i=1,\dots,p,
\end{equation}
where $Y_i[t]$ and $E_i[t]$ are obtained by stacking vertically over time the measurements of sensor $i$ and the attack vector, respectively, and the matrix $\mathcal{O}_i$ is the observability matrix of sensor $i$. These three matrices are defined by:
\begin{equation} \notag
    Y_i[t] = \begin{bmatrix}y_i[t] \\ y_i[t+1] \\ \vdots \\  y_i[t+n-1] \end{bmatrix}\in \mathbb{R}^{n},~E_i[t] = \begin{bmatrix}e_i[t] \\ e_i[t+1] \\ \vdots \\  e_i[t+n-1] \end{bmatrix}\in \mathbb{R}^{n},
\end{equation}
\begin{equation*}
    \mathcal{O}_i=\begin{bmatrix}C_i \\ C_iA \\ \vdots \\ C_iA^{n-1} \end{bmatrix}\in\mathbb{R}^{n\times n}. 
\end{equation*}

In a similar way, we stack over nodes the measurements, the observability matrices, and the attack vectors, from which a more concise representation of the linear system is obtained:
\begin{equation}
\label{eqn:sysstack}
\begin{split}
x[t+1]&=Ax[t], \\
Y[t]&=\mathcal{O}x[t]+E[t],
\end{split}
\end{equation}
where $Y[t]$, $\mathcal{O}$, and $E[t]$ are obtained by stacking vertically each $Y_i[t]$, $\mathcal{O}_i$, and $E_i[t]$, respectively, for $i\in\{1,2,\dots,p\}$, i.e.:
\begin{equation} \notag
    Y[t] = \begin{bmatrix}Y_1[t] \\ Y_2[t] \\ \vdots \\  Y_p[t] \end{bmatrix}\in \mathbb{R}^{pn},~\mathcal{O} = \begin{bmatrix}\mathcal{O}_1 \\ \mathcal{O}_2 \\ \vdots \\  \mathcal{O}_p \end{bmatrix}\in \mathbb{R}^{pn\times n},
\end{equation}
\begin{equation*}
    E[t] = \begin{bmatrix}E_1[t] \\ E_2[t] \\ \vdots \\  E_p[t] \end{bmatrix}\in \mathbb{R}^{pn}. 
\end{equation*}

We also note that since the adversary can only attack at most $s$ nodes, and since for an attack-free sensor $i$ we have $E_i[t]=0$ for any $t$, the vector $E[t]$ is sparse at any $t$. In the end, we define the matrices $y[t]=\begin{bmatrix}y_1[t] & y_2[t] & \dots & y_p[t]\end{bmatrix}^T\in\mathbb{R}^{p}$, $C=\begin{bmatrix}C_1^T & C_2^T & \dots & C_p^T\end{bmatrix}^T\in\mathbb{R}^{p\times n}$ and $e[t]=\begin{bmatrix}e_1[t] & e_2[t] & \dots & e_p[t]\end{bmatrix}^T\in\mathbb{R}^{p}$ for future use.

We assume that the communication between nodes in the network can be modeled by an undirected graph. Each node is modeled by a vertex $i\in\mathcal{V}$, and a communication link from node $i$ to node $j$ is modeled by an edge $(i,j)\in \mathcal{E}$ from vertex $i$ to $j$. Since we assume the graph is undirected, $(i,j)\in \mathcal{E}$ implies $(j,i)\in \mathcal{E}$ which shows that node $j$ can also send messages to node $i$.

\begin{remark}
Although the linear system \eqref{eqn:sysstack} is modelled without inputs, we note that, all results in this paper can be conveniently extended to the case when the input is known to every node in the network.
\end{remark}

\subsection{Assumptions}
Here we list all the assumptions we use in this paper. Some of them have already been discussed when introducing the adversary model. 

\begin{assumption}
\label{assum1}
The network can be modeled by a communication graph which is time-invariant, undirected and connected.
\end{assumption}
\begin{assumption}
\label{assum2}
The adversary is only able to attack at most $s$ nodes. The set of attacked nodes remains constant over time.
\end{assumption}
\begin{assumption}
\label{assum3}
The system dynamics are known to all nodes in the network.
\end{assumption}
\begin{assumption}
\label{assum4}
The adversary is only able to change the measurements of the attacked nodes. Each attacked node still executes its algorithm correctly.
\end{assumption}

The Assumptions~\ref{assum1},~\ref{assum2},~\ref{assum3}, and~\ref{assum4} are in line with the assumptions in~\cite{Distributed1} and~\cite{Distributed2} except that we also assume each node knows the $C_i$ matrices of all other nodes throughout the network. We also require the following assumptions:
\begin{assumption}
\label{assum5}
All the measurements $y_i$ are scalars.
\end{assumption}
\begin{assumption}
\label{assum6}
For any unstable\footnote{Note that an unstable eigenvalue of $A$ is an eigenvalue with magnitude greater or equal to 1. Conversely, the magnitude of a stable eigenvalue is strictly less than 1.} eigenvalue $m+n\mathbf{i}$ of $A$ and for any non-zero eigenvalue $\lambda$ of the communication graph laplacian $\mathcal{L}$, the inequality $\left(m-\frac{\lambda^2}{\lambda^2_{\max}(\mathcal{L})}\right)^2+n^2<1$ holds. 
\end{assumption}

Assumption~\ref{assum5} is not necessary but we adopt it in the context of this paper for brevity. The authors believe that, with slight modifications, any result or algorithm proposed in this paper can be extended to the case where Assumption~\ref{assum5} is dropped.

Whenever the discrete-time linear system \eqref{eqn:sys} is the time discretization of an underlying continuous-time linear system:
\begin{equation}
\label{eqn:sampled}
    \dot{\tilde{x}}=\tilde{A}\tilde{x},
\end{equation}
Assumption \ref{assum6} can be interpreted as a requirement on the sampling time $\tau$ used to obtain \eqref{eqn:sys} from \eqref{eqn:sampled}. If $\tau$ is small enough, then $A = e^{\tilde A\tau}$ can be made arbitrarily close to the identity matrix. In other words, by increasing the sampling rate, $m$ can be made arbitrary close to $1$ and $n$ close to $0$, and such a pair of $m$ and $n$ satisfies Assumption \ref{assum6}.

We also note that, Assumption~\ref{assum6} implies for any eigenvalue $m+n\mathbf{i}$ of $A$, the inequality $(m-1)^2+n^2<1$ holds.

\subsection{The Decentralized Secure State-Tracking Problem} 
In this section we provide the definition of the decentralized secure state-tracking problem.

In plain words, to solve the DSST problem, each node $i$ must maintain a state estimate $\hat x_i[t]$ which converges asymptotically to the true state $x[t]$. We also refer to this property by saying that $\hat x_i[t]$ tracks $x[t]$. The rigorous definition of the DSST problem is as follows:

\begin{definition}[Decentralized Secure State-Tracking Problem]
Consider a linear system subject to attacks~\eqref{eqn:sys} satisfying Assumptions~\ref{assum2}-\ref{assum6} and a communication network satisfying Assumption~\ref{assum1}. The decentralized secure state-tracking problem asks for an algorithm running at each node $i\in P$ with measurements $y_i\in\mathbb{R}$ and messages from neighboring nodes as its input, and such that its output $\hat x_i[t]$ satisfies:  $$\lim_{t\rightarrow \infty}\Vert\hat x_i[t]-x[t]\Vert=0.$$
\end{definition}

\begin{remark}
Differently from centralized SSR problem (see for example~\cite{mao19} for its definition), the DSST problem does not require each node to explicitly know which subset of nodes in the network is attacked. For the SSR problem, it has been argued in~\cite{Hamza14} that knowing the true state $x[t]$ and knowing the set of attacked sensors is equivalent, while this is not the case in the DSST problem setting.
\end{remark}

\begin{remark}
In the DSST problem setting, it is possible to require \textbf{all} nodes, including attacked nodes and attack-free nodes, to maintain a state estimate which asymptotically tracks the system state $x[t]$. This follows from Assumption~\ref{assum1} which restricts the adversary to only alter sensor measurements. In particular, nodes with spoofed measurements are still able to correctly execute their algorithms. Therefore, the attacked nodes are still able reconstruct the state $x[t]$ and may even determine that their own measurements have been altered.
\end{remark}

\subsection{Key Idea}

The key idea for solving the DSST problem is based on the simple observation that instead of having access to measurements $Y=\begin{bmatrix}Y_1^T&Y_2^T&\dots&Y_p^T\end{bmatrix}^T$ of all the sensors, a compressed version $(D\otimes I_n)Y$ of the measurements may suffice to reconstruct the state, where the compression matrix $D\in\mathbb{R}^{v\times p}$ reduces the measurements from $\mathbb{R}^p$ to $\mathbb{R}^v$ with $v\le p$. Compression is possible, in most cases, and thus $v$ will be strictly smaller than $p$. We will elaborate on the feasible choices for a compression matrix $D$ in Section~\ref{sec:designD}.

Equipped with the observation that the compressed version of measurements $(D\otimes I_n)Y$ suffices to reconstruct state, it is natural to ask: how can each node have access to the compressed measurements? We show how to reformulate this problem as a dynamic average consensus problem in Section~\ref{sec:reduc} and an algorithm for each node to track $(D\otimes I_n)Y$ is provided in Section~\ref{sec:DAC}.

Lastly, in Section~\ref{sec:DAC} we show how to reconstruct the state $x$ from the compressed measurements $(D\otimes I_n)Y$ at each node. We will prove that by suitably introducing a slack variable, this problem can be reduced to an SSR problem. This observation implies that we may employ any algorithm for the SSR problem that does not require additional assumptions, even though the compression slightly changes the attack model. It then follows that, as each sensor's estimate of the compressed measurements converges, so does the state reconstructed from the estimated compressed measurements. These three steps provide a solution to the DSST problem.

\section{Design of the Compression Matrix and Solvability of DSST}
\label{sec:designD}

There are two considerations involved in the choice of the compression matrix $D$. On the one hand, in order to reduce communications and storage, we want the $D$ matrix to have the least possible number of rows. On the other hand, the compressed measurements $(D\otimes I_n)Y$ must provide enough information for each node to correctly reconstruct the state. We start with the definition of sparse detectability with respect to a matrix, which is a generalization of sparse observability~\cite{shoukry2015event,chong2015observability} as well as sparse detectability~\cite{nakahira2015dynamic} but stronger, as we will very soon see.

\begin{definition}[Detectability\cite{antsaklis2006linear}]
A pair $(A,C)$ is detectable if all the unobservable eigenvalues of $A$ are stable.
\end{definition}

In other words, if $(A,C)$ is detectable, then for any two trajectories $x_1[t]=A^tx_1[0]$ and $x_2[t]=A^tx_2[0]$, equality $C(x_1[t]-x_2[t])=0$ holding for all $t\in\mathbb{N}$ implies $\lim_{t\rightarrow \infty}(x_1[t]-x_2[t])=0$. This property will be used in the proof of Theorem~\ref{thm:pickd}.

\begin{definition}[Sparse detectability\cite{nakahira2015dynamic}]
The sparse detectability index of the system \eqref{eqn:sys} is the largest integer $k$ such that for any $\mathcal{K}\subseteq P$ satisfying $|\mathcal{K}|\geq p-k$ the pair $(A,C_{\mathcal{K}})$ is detectable. When the sparse detectability index is $k$, we say that system \eqref{eqn:sys} is $k$-sparse detectable.
\end{definition}

In plain words, if we remove any subset of at most $s$ sensors and the remaining system represented by the pair $(A,C_{\mathcal{K}})$ is still detectable, then we say that the original system is $s$-sparse detectable. We now present another perspective on sparse observability, paving the way for the definition of sparse detectability with respect to a matrix, which plays a critical role in our study. We first define the set: $$\mathbf{Q}_s=\{L\in\mathbb{R}^{*\times p}\vert\mathrm{ker}(L)=\text{span}~V,V\subseteq \mathbf{E}_p,|V|\leq s\},$$
that we use in the following equivalent definition of sparse detectability:

\begin{definition}[Sparse detectability]
The sparse detectability index of the system \eqref{eqn:sys} is the largest integer $k$ such that the pair $(A,LC)$ is detectable for any $L\in\mathbf{Q}_k$. When the sparse detectability index is $k$, we say that system \eqref{eqn:sys} is $k$-sparse detectable.
\end{definition}

Intuitively, by left multiplying $C$ by $L$ we remove the measurements of a subset of $s$ sensors without losing information from any remaining sensors. The equivalence between these two definitions of sparse detectability is trivial and, due to space limitations, we do not provide a proof. We proceed by introducing the new notion of sparse detectability with respect to a matrix:

\begin{definition}[Sparse detectability with respect to a matrix]
Consider the system \eqref{eqn:sys}, a matrix $D\in\mathbb{R}^{v\times p}$, and define the following set: $$\mathbf{P}_s=\{L\in\mathbb{R}^{*\times v}\vert\mathrm{ker}(L)=D(\text{\rm{span}}~V),V\subseteq \mathbf{E}_p,|V|\leq s\}.$$ The sparse detectability index of the system \eqref{eqn:sys} with respect to $D$ is the largest integer $k$ such that the pair $(A,LDC)$ is detectable for any $L\in\mathbf{P}_k$. When the sparse detectability index with respect to $D$ is $k$, we say that system \eqref{eqn:sys} is $k$-sparse detectable with respect to $D$.
\end{definition}

\begin{remark}
\label{remark:sparseob}
We note that, sparse detectability with respect to $I$ coincides with the definition of sparse detectability.
\end{remark}
For later use, we formalize the contrapositive of Remark \ref{remark:sparseob} in the next lemma.

\begin{lemma}
\label{lemma:rough}
Any pair $(A,C)$ that fails to be $s$-sparse detectable is not $s-$sparse detectable with respect to any matrix of compatible dimensions.
\end{lemma}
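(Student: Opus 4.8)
The plan is to prove the contrapositive of the statement directly: assuming $(A,C)$ is $s$-sparse detectable with respect to some matrix $D\in\mathbb{R}^{v\times p}$, I would show $(A,C)$ must be $s$-sparse detectable (in the ordinary sense). By the definition of sparse detectability with respect to a matrix, the hypothesis says that $(A,LDC)$ is detectable for every $L\in\mathbf{P}_s$, where $\ker(L)=D(\text{span }V)$ for some $V\subseteq\mathbf{E}_p$ with $|V|\le s$. By the ordinary definition, I must show $(A,\tilde{L}C)$ is detectable for every $\tilde{L}\in\mathbf{Q}_s$, i.e., for every $\tilde{L}$ with $\ker(\tilde{L})=\text{span }V$, $V\subseteq\mathbf{E}_p$, $|V|\le s$.

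The key step is to connect a given $\tilde{L}\in\mathbf{Q}_s$ (say $\ker(\tilde L)=\text{span }V$) with a suitable $L\in\mathbf{P}_s$ and then compare the unobservable subspaces of $(A,\tilde{L}C)$ and $(A,LDC)$. First I would pick $L\in\mathbf{P}_s$ associated with the \emph{same} index set $V$, so $\ker(L)=D(\text{span }V)$. The crucial observation is that $\ker(\tilde{L}C)$ and $\ker(LDC)$ are related: $x\in\ker(\tilde{L}C)$ iff $Cx\in\text{span }V$, while $x\in\ker(LDC)$ iff $DCx\in D(\text{span }V)$, i.e. iff $Cx\in\text{span }V + \ker D$. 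Hence $\ker(\tilde{L}C)\subseteq\ker(LDC)$, and consequently the unobservable subspace of the pair $(A,\tilde{L}C)$ — which is the largest $A$-invariant subspace contained in $\ker(\tilde{L}C)$ — is contained in the unobservable subspace of $(A,LDC)$. Since $(A,LDC)$ is detectable, its unobservable subspace contains no eigenvector of $A$ associated with an unstable eigenvalue; the same therefore holds for the (smaller) unobservable subspace of $(A,\tilde{L}C)$, so $(A,\tilde{L}C)$ is detectable. As $\tilde{L}\in\mathbf{Q}_s$ was arbitrary, $(A,C)$ is $s$-sparse detectable.

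I expect the main subtlety — though it is more bookkeeping than a genuine obstacle — to be handling the fact that the matrices $L\in\mathbf{P}_s$ and $\tilde{L}\in\mathbf{Q}_s$ are only specified up to their kernels (number of rows unspecified), so the argument must be phrased entirely in terms of kernels and invariant subspaces rather than explicit matrix manipulations; the inclusion $\ker(\tilde LC)\subseteq\ker(LDC)$ is what makes this work regardless of the particular representatives chosen. A secondary point to verify is that detectability is genuinely characterized by the property quoted after the definition (no unstable unobservable modes), so that shrinking the unobservable subspace preserves detectability — this is immediate from the definition of detectability as "all unobservable eigenvalues of $A$ are stable," since a smaller $A$-invariant subspace contains a subset of the unobservable eigenvalues.
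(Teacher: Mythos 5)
Your proof is correct, and it is worth pointing out that it actually supplies more than the paper does: the paper never proves this lemma, presenting it merely as a ``formalization of the contrapositive'' of the remark that sparse detectability with respect to $I$ coincides with ordinary sparse detectability, which literally only addresses the case $D=I$; the quantification over \emph{all} compression matrices $D$ is left to the intuition that compressing measurements cannot create detectability. Your kernel-inclusion argument is precisely the missing justification: for a given index set $V$ with $|V|\le s$, you pair $\tilde L\in\mathbf{Q}_s$ with $\mathrm{ker}(\tilde L)=\text{span}\,V$ against some $L\in\mathbf{P}_s$ with $\mathrm{ker}(L)=D(\text{span}\,V)$ (such an $L$ exists because the number of rows is unconstrained), note that $Cx\in\text{span}\,V$ forces $DCx\in D(\text{span}\,V)$ and hence $\mathrm{ker}(\tilde LC)\subseteq\mathrm{ker}(LDC)$, conclude that the unobservable subspace of $(A,\tilde LC)$ (the largest $A$-invariant subspace in $\mathrm{ker}(\tilde LC)$) is contained in that of $(A,LDC)$, and finally use the PBH-type characterization that detectability means no eigenvector of $A$ for an unstable eigenvalue lies in the unobservable subspace, a property inherited by any smaller $A$-invariant subspace. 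All of these steps are sound. The only cosmetic caveat is that you should read ``$s$-sparse detectable'' in the monotone sense (detectable for every $L$ in $\mathbf{Q}_s$, i.e.\ index at least $s$), which is how your proof and the rest of the paper use the notion, even though the paper's definitions literally peg the index to an exact value.
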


Equipped with the definition of sparse observability with respect to a matrix, we are ready to present necessary conditions for the compressed measurements $(D\otimes I_n)Y$ to carry enough information to reconstruct the state and thus solves the DSST problem.
\begin{lemma}
\label{thm:pickd}

If there exists an algorithm producing a state estimate $\hat{x}$ satisfying $\lim_{t\to\infty} \Vert x[t]-\hat{x}[t]\Vert=0$ for trajectory $x[t]$ of system \eqref{eqn:sys} satisfying Assumptions \ref{assum2}-\ref{assum4} and starting at any initial condition in $\mathbb{R}^n$ only using $(D\otimes I_n)Y[t]$ as input and for any attack signal $e[t]\in \mathbb{R}^p$, then $(A,C)$ is $2s-$sparse detectable with respect to $D$.
\end{lemma}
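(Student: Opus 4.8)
The plan is to argue by contraposition: assuming $(A,C)$ fails to be $2s$-sparse detectable with respect to $D$, I will exhibit two state trajectories of \eqref{eqn:sys} and two admissible attack signals that produce the \emph{same} compressed measurement stream $(D\otimes I_n)Y[t]$ for all $t$, yet whose difference does not converge to zero. Since any algorithm using only $(D\otimes I_n)Y[t]$ as input must produce the same estimate $\hat x[t]$ in both scenarios, it cannot track both trajectories, contradicting the hypothesis. First I would unpack the negation: there exists $L\in\mathbf{P}_{2s}$, i.e. a matrix with $\mathrm{ker}(L)=D(\mathrm{span}\,V)$ for some $V\subseteq\mathbf{E}_p$ with $|V|\le 2s$, such that $(A,LDC)$ is \emph{not} detectable. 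Detectability failure gives an unobservable eigenvalue of $A$ with magnitude $\ge 1$ for the pair $(A,LDC)$; equivalently (working with real Jordan blocks / taking real and imaginary parts as the paper does elsewhere) there is a nonzero $z\in\mathbb{R}^n$ generating a trajectory $\delta[t]=A^t z$ that does not decay to zero and satisfies $LDC A^t z = 0$ for all $t$, hence $DC A^t z\in\mathrm{ker}(L)=D(\mathrm{span}\,V)$ for all $t$.

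Next I would split the index set $V$ into two halves $V_1,V_2$ with $|V_1|,|V_2|\le s$, writing $\mathrm{span}\,V=\mathrm{span}\,V_1+\mathrm{span}\,V_2$. The condition $DCA^tz\in D(\mathrm{span}\,V_1)+D(\mathrm{span}\,V_2)$ lets me decompose, for each $t$, the compressed healthy-output mismatch $D\,CA^t z = D w_1[t] + D w_2[t]$ with $\mathrm{supp}(w_j[t])\subseteq\{i:\mathbf{e}_i\in V_j\}$ (so each $w_j[t]$ is $s$-sparse). Now build the two scenarios in the stacked form \eqref{eqn:sysstack}: scenario $a$ has true state $x[t]$ and attack supported on $V_1$ chosen so that the measurement of each sensor indexed by $V_1$ is modified by exactly $(\mathcal{O}_i$-components of$)$ $-w_1[t]$ entries; scenario $b$ has true state $x[t]-\delta[t]=x[t]-A^tz$ and attack supported on $V_2$ accounting for the residual $w_2[t]$. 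A short computation shows both scenarios yield identical $(D\otimes I_n)Y[t]$: the difference $(D\otimes I_n)(\mathcal{O}\,\delta[t])$ between the clean compressed outputs is exactly cancelled by the difference of the (compressed) injected attack vectors, which is precisely what $DCA^t z = Dw_1[t]+Dw_2[t]$ encodes after stacking the $n$ consecutive time shifts into $\mathcal{O}_i$. Both attacks hit at most $s$ sensors with a fixed support, so Assumptions~\ref{assum2}--\ref{assum4} hold in each scenario.

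The contradiction then follows: a valid algorithm would force $\hat x[t]\to x[t]$ in scenario $a$ and $\hat x[t]\to x[t]-A^t z$ in scenario $b$, but the algorithm sees identical inputs, so $A^t z\to 0$, contradicting the choice of $z$ as a non-decaying mode. The step I expect to be the main obstacle is the bookkeeping that turns ``$DCA^t z\in D(\mathrm{span}\,V)$ for all $t$'' into a genuine pair of \emph{attack signals} $E_i[t]\in\mathbb{R}$ (scalar per time step) consistent across the overlapping windows that define the stacked vectors $E_i[t]\in\mathbb{R}^n$ — i.e. checking that the per-window sparse decompositions $w_j[t]$ can be realized by a single time series $e_i[\cdot]$ on the sensors in $V_j$, and that the supports $V_1,V_2$ can be chosen \emph{time-invariant}. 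This is where the scalar-measurement Assumption~\ref{assum5} and the structure of $\mathcal{O}_i$ (consecutive shifts of the same scalar sequence) do the work, and I would also invoke Lemma~\ref{lemma:rough} implicitly to note that the $D$-compressed failure is no weaker than an uncompressed one, keeping the construction consistent with the attack model.
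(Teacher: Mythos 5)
Your proposal is correct and follows essentially the same route as the paper: negate $2s$-sparse detectability with respect to $D$ to get a non-decaying unobservable trajectory for some $(A,LDC)$ with $L\in\mathbf{P}_{2s}$, split the resulting $2s$-sparse discrepancy into two $s$-sparse attacks with fixed supports, and conclude by indistinguishability of the two scenarios from $(D\otimes I_n)Y$. The ``window-consistency'' obstacle you flag is not a real one, because (as in the paper) the attacks are defined as per-time-step scalar signals $e_i[t]$ from the decomposition $DCA^tz=Dw_1[t]+Dw_2[t]$, so the stacked vectors $E_i[t]$ are automatically consistent and the supports are time-invariant by fixing the split $V=V_1\cup V_2$.
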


\begin{proof}
For the sake of contradiction, we assume that $(A,C)$ is not $2s-$sparse detectable with respect to $D$. By the definition of sparse detectability with respect to a matrix, there exist two state trajectories $\tilde x_1[t]=A^t\tilde x_1[0]$ and $\tilde x_2[t]=A^t\tilde x_2[0]$ such that $\lim_{t\rightarrow \infty} (\tilde x_1[t]-\tilde x_2[t])\neq 0$ and
$LDC(\tilde x_1[t]-\tilde x_2[t])=0$ for some $L\in\mathbf{P}_{2s}$ and any $t\in\mathbb{N}$. This also implies the existence of two $s-$sparse signals $e_1[t]\in\mathbb{R}^p$ and $e_2[t]\in\mathbb{R}^p$ such that the equation
$D(C\tilde x_1[t]-C\tilde x_2[t]+e_1[t]-e_2[t])=0$ holds for any $t\in\mathbb{N}$, which is equivalent to:
\begin{equation}
\label{eqn:confuse}
u[t]\triangleq D(C\tilde x_1[t] + e_1[t]) = D(C\tilde x_2[t] + e_2[t]).
\end{equation}
Consider the following 2 scenarios: (1) the state trajectory is $\tilde x_1[t]$ and the attack signal is $e_1[t]$; (2) the state trajectory is $\tilde x_2[t]$ and the attack signal is $e_2[t]$. By~\eqref{eqn:confuse} we know that if a node only has access to $u[t]$ then it cannot distinguish between these 2 scenarios, and since $\lim_{t\rightarrow \infty}(\tilde x_1[t]-\tilde x_2[t])\ne 0$, it is thus unable to track the state.
\end{proof}

Lemma \ref{thm:pickd} shows that it is necessary to pick the compression matrix $D$ such that the pair $(A,C)$ is $2s$-sparse detectable with respect to $D$, otherwise it is impossible to track the state from compressed measurements $(D\otimes I_n)Y$. Later in Lemma \ref{thm:mainthm} we will see that this condition also suffices to solve the DSST problem. Anticipating the results in Section \ref{sec:DAC} we state here the main contribution of this paper, $2s$-sparse detectability with respect to $D$ is both necessary and sufficient to solve the DSST problem:

\begin{theorem}
\label{thm:solvability}
The DSST problem associated with the linear system subject to attacks defined in~\eqref{eqn:sys} satisfying Assumptions~\ref{assum2}-\ref{assum6} and a communication network satisfying Assumption~\ref{assum1} is solvable if and only if the pair $(A,C)$ is 2s-sparse detectable.
\end{theorem}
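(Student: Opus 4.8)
The plan is to establish the two implications separately, deriving necessity from the already-proven Lemmas~\ref{lemma:rough} and~\ref{thm:pickd} and sufficiency from the three-step construction sketched in the Key Idea. For necessity, suppose $(A,C)$ is not $2s$-sparse detectable. By Remark~\ref{remark:sparseob} (equivalently Lemma~\ref{lemma:rough}), $(A,C)$ is then not $2s$-sparse detectable with respect to the identity matrix $I_p$, and so, reading Lemma~\ref{thm:pickd} in contrapositive form with $D = I_p$, there is no algorithm that tracks $x[t]$ for every initial condition in $\mathbb{R}^n$ and every $s$-sparse attack signal $e[t]$ while using $(I_p \otimes I_n)Y[t] = Y[t]$ as its input. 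Now observe that in any decentralized scheme the messages exchanged are causal functions of the sensor measurements, so by a straightforward induction on time every node's estimate $\hat x_i[t]$ is a causal function of the history of $Y$; were some $\hat x_i[t]$ to track $x[t]$, this function would constitute precisely the centralized tracking algorithm just shown not to exist. Hence no decentralized algorithm solves the DSST problem, which proves the ``only if'' direction.

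For sufficiency, assume $(A,C)$ is $2s$-sparse detectable and fix any admissible compression matrix $D$ --- taking $D = I_p$ always works, while the existence of smaller $D$ is the subject of Section~\ref{sec:designD} --- so that, by Remark~\ref{remark:sparseob}, $(A,C)$ is $2s$-sparse detectable with respect to $D$. The solution then has three parts. First, following Section~\ref{sec:reduc}, the problem of having each node track the compressed measurements $(D \otimes I_n)Y[t]$ is recast as a dynamic average consensus problem, and the observer of Section~\ref{sec:DAC} is deployed; Assumption~\ref{assum6} --- the condition tying the unstable eigenvalues of $A$ to the spectrum of the Laplacian $\mathcal{L}$, interpretable as a sufficiently fast sampling rate --- is exactly what makes each node's estimate of $(D \otimes I_n)Y[t]$ converge (exponentially) to the true value. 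Second, Lemma~\ref{thm:mainthm} of Section~\ref{sec:DAC} uses $2s$-sparse detectability with respect to $D$ to reduce the reconstruction of $x[t]$ from $(D \otimes I_n)Y[t]$ to a standard SSR problem via a slack variable, hence provides a reconstruction map solvable by any off-the-shelf SSR algorithm. Third, one composes the two: each node feeds its converging estimate of the compressed measurements into the SSR-based reconstruction, and one checks that the resulting state estimate $\hat x_i[t]$ still satisfies $\lim_{t\to\infty}\Vert \hat x_i[t] - x[t]\Vert = 0$ for every $i \in P$, which is the definition of solving the DSST problem.

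I expect the delicate step to be the composition in the third part of the sufficiency argument: the consensus observer only returns an estimate of $(D \otimes I_n)Y[t]$ that converges to the truth, not its exact value, so one must verify that the downstream SSR-based reconstruction is robust to this vanishing input error --- for instance because, once the attacked support is identified, the reconstruction acts linearly (hence Lipschitz-continuously) on the relevant windowed data, or via a timescale-separation argument. A second, structural subtlety --- already anticipated by the definition of sparse detectability with respect to a matrix --- is that compression perturbs the attack model: the compressed attack $(D \otimes I_n)E[t]$ is no longer $s$-sparse but is confined to $D(\text{span}\,V)$ with $V \subseteq \mathbf{E}_p$ and $|V| \le s$, so the SSR instance produced in the second part is not literally the classical one, and the slack-variable reduction of Section~\ref{sec:DAC} is precisely what reconciles the two.
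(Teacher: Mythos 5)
Your proposal takes essentially the same route as the paper: necessity by applying Lemma~\ref{thm:pickd} with $D=I_p$ (you merely make explicit the reduction from a decentralized scheme to a centralized algorithm acting on $Y[t]$, which the paper leaves implicit), and sufficiency by composing the consensus-based tracking of $(D\otimes I_n)Y$ with the slack-variable SSR reconstruction, i.e.\ Lemma~\ref{thm:mainthm}. The ``delicate'' composition step you flag---robustness of the reconstruction to a vanishing tracking error---is exactly what Lemma~\ref{lemma:converge} supplies via the bound $\Vert\hat x-x\Vert_2\leq\beta\alpha$ with $\beta^{-1}=\frac{1}{2}\sigma_{\min}((L\otimes I_n)(D\otimes I_n)\mathcal{O})$, so your instinct about where the work lies matches the paper's treatment.
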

\begin{proof}
Necessity comes easily from Lemma \ref{thm:pickd} by picking $D=I_p$. Sufficiency follows from Lemma \ref{thm:mainthm} in Section \ref{sec:DAC}.
\end{proof}

\section{Reduction to Dynamic Average Consensus}
\label{sec:reduc}
One key step of our solution to the DSST problem is to have each node tracking the compressed measurements $(D\otimes I_n)Y$. To do this, we ask each node $i$ to maintain an estimate vector $W_i=\begin{bmatrix}(W_i^1)^T & (W_i^2)^T & \dots & (W_i^v)^T\end{bmatrix}^T\in\mathbb{R}^{vn}$, whose $j$-th block, $W_i^j$, tracks the $j$-th linear combination of measurements $\sum_i d_{ji}Y_i$, where $d_{ji}$ is the entry at the $j$-th row and $i$-th column of $D$.

In this section, and in the one that follows, we focus on the problem of tracking $(D_1\otimes I_n)Y$, where $D_1$ is the first row of $D$, which can also be written as $\sum_id_{1i}Y_i$, and the tracked value is stored in the block $W_i^1$. For technical reasons we will use $\frac{1}{p}\sum_id_{1i}Y_i$, which serves the same purpose since by Assumption~\ref{assum3} the value of $p$ is known to all nodes. Note that any algorithm that can track $\frac{1}{p}\sum_id_{1i}Y_i$ can be extended to track $\frac{1}{p}(D\otimes I_n)Y$ but running $v$ concurrent copies, each with a different set of weights $\{d_{ji}\}$. 
We observe that this problem can be seen as an instance of the dynamic average consensus problem~\cite{Tutorial}. In brief, suppose that each agent in the network has a time-varying local reference signal $\phi_i(t): [0,\infty)\rightarrow \mathbb{R}^{n}$. The dynamic average consensus problem asks for an algorithm that allows individual agents to track the time-varying average of the reference signals, given by:
\begin{equation}
u^{\text{avg}}[t]=\frac{1}{p}\sum_{i=1}^p\phi_i[t].
\end{equation}

In our problem setting, we pick the external input corresponding to $(D_1\otimes I_n)Y$ to be $\phi_i^1[t]=d_{1i}Y_i[t]$ and what we want to track is $\frac{1}{p}\sum_{i=1}^p\phi_i^1[t]$. In other words, we may adopt any algorithm that solves the dynamic average consensus algorithm thereby enabling all nodes to track $(D\otimes I_n)Y$. 

However, in the setting of dynamic average consensus problem, no knowledge about reference signals $\phi_i$ is assumed, whereas in our problem, for an attack-free node $i$, we have the following:
\begin{equation}
\label{eqn:consis}
Y_i[t+1] = \underbrace{\begin{bmatrix} 0 & 1 & 0 & \dots & 0 \\ 0 & 0 & 1 & \dots & 0 \\ \vdots & \vdots & \vdots & \ddots & \vdots \\ -\alpha_0 & -\alpha_1 & -\alpha_2 & \dots & -\alpha_{n-1} \end{bmatrix}}_{\hat A} Y_i[t].
\end{equation}
This equality comes from the construction $Y_i[t]=\begin{bmatrix}y_i[t]& y_i[t+1]&\dots & y_i[t+n-1]\end{bmatrix}^T$ where $\alpha_1,\alpha_2,\dots,\alpha_{n-1}$ are the coefficients of the characteristic polynomial of $A$. This can be seen by nothing that $y_i[t+n]=C_iA^nx[t]=C_i(\alpha_{n-1}A^{n-1}+\dots +\alpha_1A+\alpha_0)x[t]=\alpha_{n-1}y_i[t+n-1]+\dots+\alpha_1 y_i[t+1]+\alpha_0y_i[t]$ whenever node $i$ is attack free. Moreover, we note that $\hat A$ is the controller form of $A$ and has the same eigenvalues as $A$.

\begin{remark}
Since the state portion corresponding to the stable eigenvalues will decay to zero with the elapse of time, in the remainder of the paper we will only be focusing on the state portion corresponding to unstable eigenvalues.
\end{remark}

Based on Equation~\eqref{eqn:consis}, we define the following local sanity check:

\noindent\rule{12.62cm}{0.4pt}

\noindent \textbf{Local sanity check}: given measurement $y_i[t]$ of node $i$, we say node $i$ passes the local sanity check if $\Vert Y_i[t+1]-\hat A Y_i[t]\Vert\leq \epsilon$ for any $t>0$, where $\epsilon$ is a chosen error tolerance.

\noindent\rule{12.62cm}{0.4pt}

In practice, we ask all nodes in the network to constantly run the local sanity check. It is trivially seen that all attack-free nodes would pass the check at any time. On the other hand, if a node fails the local sanity check, then we immediately reach the conclusion that this node is under attack. Therefore, we assume that all the attack vectors $e_i$ corresponding to attacked nodes are constructed so that the resulting measurements $Y_i[t]$ pass the local sanity check, i.e., $\Vert Y_i[t+1]-\hat A Y_i[t]\Vert\leq \epsilon$. 

The purpose of the local sanity check is to force the dynamics of $Y_i$ to be governed by \eqref{eqn:consis}, including those from attacked nodes. We will exploit this additional knowledge of $d_{1i}Y_i$ (or $\phi_i^1$) to
achieve better tracking results.

\section{Solving the DSST Problem}
\label{sec:DAC}
We argued in Section~\ref{sec:reduc} that the tracking of the compressed measurements $(D\otimes I_n)Y$ is intrinsically identical to a dynamic average consensus problem. We extend the results in~\cite{Bai} from the scalar-input case to the vector case and present such extension in the discrete-time domain. 

\subsection{Tracking the Compressed Measurements}

We ask each node to update its estimate of $\frac{1}{p}\sum_id_{1i}Y_i[t]$ according to:

\begin{equation}
\label{eqn:localestimate}
\left\{
\begin{aligned}
W_i^1[t+1] & = && (\hat A-I)W_i^1[t] - 2k_I\sum_{j\in\mathcal{N}_i}(\eta_j[t]-\eta_i[t]) + \phi_i^1[t], \\
b_i[t+1] & = && \hat Ab_i[t] + k_I\sum_{j\in \mathcal{N}_i}(W_j^1[t]-W_i^1[t]), \\
\eta_i[t] & = && k_Pb_i[t] + k_I\sum_{j\in \mathcal{N}_i}(W_j^1[t]-W_i^1[t]),
\end{aligned}
\right.
\end{equation}
where $W_i^1[t]$ is the estimate at node $i$ of the average of the input signal $\frac{1}{p}\sum_i \phi_i^1[t]$, $k_I,k_P\in\mathbb{R}$ are design parameters, $b_i[t]$ and $\eta_i[t]$ are internal states of node $i$. The following theorem states that under suitable choices of $k_I$ and $k_P$ the estimate of each node $W_i^1[t]$ approaches the true state $\frac{1}{p}\sum_i \phi_i^1[t]$.

\begin{theorem}
\label{thm:main}
Consider the average tracking algorithm in~\eqref{eqn:localestimate} where the input signal satisfies $\phi_i^1[t+1]=\hat A\phi_i^1[t]$ for all $i\in\{1,2,\dots,p\}$. There exist constants $k_{I}\in\mathbb{R}$ and $k_{P}\in\mathbb{R}$ such that the state estimate $W_i^1[t]$ at each node tracks the average of the input signals exponentially fast, i.e.: $$\left\Vert W_i^1[t]-\frac{1}{p}\sum_i \phi_i^1[t] \right\Vert <\beta \alpha^t,$$ for some $0<\alpha<1$ and $\beta>0$.
\end{theorem}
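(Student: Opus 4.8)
The plan is to lift the per-node recursion~\eqref{eqn:localestimate} to the network level, split the resulting dynamics into a \emph{consensus} part and a \emph{disagreement} part, and show each decays exponentially; the design parameters $k_I,k_P$ will be needed only for the disagreement part. As in the remark preceding this section, I restrict attention to the unstable eigenspace of the controller form and write $\hat A$ for the restriction of $\hat A$ to that subspace, so every eigenvalue $m+n\mathbf{i}$ of $\hat A$ satisfies $m^2+n^2\ge1$ while, by Assumption~\ref{assum6} evaluated at $\lambda=\lambda_{\max}(\mathcal L)$, $(m-1)^2+n^2<1$, and more generally $\bigl(m-\lambda^2/\lambda_{\max}^2(\mathcal L)\bigr)^2+n^2<1$ for every nonzero eigenvalue $\lambda$ of the graph Laplacian $\mathcal L$. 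For the average $\bar W[t]:=\tfrac1p\sum_iW_i^1[t]$: summing the first line of~\eqref{eqn:localestimate} over $i$, the coupling term vanishes because $\sum_i\sum_{j\in\mathcal N_i}(\eta_j-\eta_i)=0$ on an undirected graph, leaving $\bar W[t+1]=(\hat A-I)\bar W[t]+\bar\phi[t]$ with $\bar\phi[t]:=\tfrac1p\sum_i\phi_i^1[t]$ obeying $\bar\phi[t+1]=\hat A\bar\phi[t]$; a one-line computation then gives $r[t]:=\bar W[t]-\bar\phi[t]$, $r[t+1]=(\hat A-I)r[t]$, which decays exponentially since $\rho(\hat A-I)<1$ by the above, and this holds for \emph{any} $k_I,k_P$.

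Next I turn to the disagreement part. Let $\hat W,\hat b,\hat\phi$ denote the stacks $W,b,\phi$ projected onto $\mathbf 1_p^\perp$. Using $\sum_{j\in\mathcal N_i}(z_j-z_i)=-(\mathcal Lz)_i$ and eliminating $\eta$, these satisfy a closed recursion $\hat W[t+1]=\bigl(I\otimes(\hat A-I)-2k_I^2(\mathcal L^2\otimes I)\bigr)\hat W[t]+2k_Ik_P(\mathcal L\otimes I)\hat b[t]+\hat\phi[t]$, $\hat b[t+1]=(I\otimes\hat A)\hat b[t]-k_I(\mathcal L\otimes I)\hat W[t]$, with $\hat\phi[t+1]=(I\otimes\hat A)\hat\phi[t]$; the discarded averages of $b$ and $\eta$ may grow like $\hat A^t$, but they are decoupled from everything feeding $W_i^1$ and hence harmless for the claim. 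Diagonalizing $\mathcal L=\sum_{k\ge2}\lambda_kv_kv_k^T$ on $\mathbf 1_p^\perp$ and passing to the $v_k$-coordinates $\xi_k,\zeta_k,\psi_k$ of $\hat W,\hat b,\hat\phi$ decouples the system into independent blocks indexed by the nonzero Laplacian eigenvalues $\lambda_k$. In block $k$ the input $\psi_k$ is persistent ($\psi_k[t+1]=\hat A\psi_k[t]$), so I remove it by the internal-model substitution $\tilde\zeta_k:=\zeta_k+\tfrac1{2k_Ik_P\lambda_k}\psi_k$ (legitimate since $k_I,k_P,\lambda_k\neq0$), which cancels $\psi_k$ exactly and leaves the \emph{autonomous} error recursion $(\xi_k,\tilde\zeta_k)[t+1]=M_k\,(\xi_k,\tilde\zeta_k)[t]$ with $M_k=\begin{bmatrix}\hat A-I-2k_I^2\lambda_k^2I & 2k_Ik_P\lambda_kI\\ -k_I\lambda_kI & \hat A\end{bmatrix}$. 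The problem thus reduces to choosing $k_I,k_P$ so that $M_k$ is Schur for each of the finitely many $\lambda_k$.

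For this I would take $k_P=1$ and $k_I=1/(\sqrt2\,\lambda_{\max}(\mathcal L))$. Since the four $n\times n$ blocks of $M_k$ commute (each is a polynomial in $\hat A$), the formula for the determinant of a block matrix with commuting blocks gives $\det(zI-M_k)=\det\!\bigl((zI-\hat A+I+2k_I^2\lambda_k^2I)(zI-\hat A)+2k_I^2k_P\lambda_k^2I\bigr)$, and a short computation shows that with $k_P=1$ the bracketed expression factors as $(zI-\hat A+I)(zI-\hat A+2k_I^2\lambda_k^2I)$, whence $\mathrm{spec}(M_k)=\{\mu-1:\mu\in\mathrm{spec}(\hat A)\}\cup\{\mu-2k_I^2\lambda_k^2:\mu\in\mathrm{spec}(\hat A)\}$. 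With $2k_I^2=\lambda_{\max}^{-2}(\mathcal L)$ these eigenvalues are $\mu-1$ and $\mu-\lambda_k^2/\lambda_{\max}^2(\mathcal L)$, which Assumption~\ref{assum6} places strictly inside the unit disk. Hence $\rho(M_k)<1$ for all $k$; picking $\alpha\in(0,1)$ above $\rho(\hat A-I)$ and every $\rho(M_k)$ and using the standard bound $\|N^t\|\le\beta_N\alpha^t$ for $\rho(N)<\alpha$ on $r[t]$ and on each error block, together with boundedness of the linear map from block coordinates back to the $W_i^1$'s, yields $\|W_i^1[t]-\tfrac1p\sum_i\phi_i^1[t]\|\le\|r[t]\|+\|W_i^1[t]-\bar W[t]\|<\beta\alpha^t$, as claimed.

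\textbf{Main obstacle.} I expect the gain design to be the crux: noticing that $k_P=1$ collapses the $2n\times2n$ closed-loop block $M_k$ into two shifted copies $\mu-1$ and $\mu-2k_I^2\lambda_k^2$ of $\mathrm{spec}(\hat A)$, and then realizing that $k_I=1/(\sqrt2\,\lambda_{\max}(\mathcal L))$ makes the second shift exactly $\lambda_k^2/\lambda_{\max}^2(\mathcal L)$ — precisely the displacement under which Assumption~\ref{assum6} guarantees the unstable eigenvalues fall inside the unit disk, so that Assumption~\ref{assum6} is not merely sufficient but tight for this estimator. A minor technical wrinkle is that $\hat A$ need not be diagonalizable; I avoid any genericity hypothesis on $A$ by using the commuting-block determinant identity rather than an eigendecomposition of $\hat A$. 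The remaining ingredients — the average/disagreement split, the reduction to an autonomous error system via the internal-model substitution, and the observation that the growth of the discarded averages is harmless — are structural bookkeeping.
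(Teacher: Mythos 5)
Your proposal is correct and takes essentially the same route as the paper: the same split into the consensus direction $\mathbf 1_p$ and the disagreement subspace $\mathbf 1_p^\perp$, the same cancellation of the persistent input by shifting by a particular solution, the same reduction to per-Laplacian-eigenvalue $2n\times 2n$ blocks, and literally the same gains $k_P=1$, $k_I=1/(\sqrt2\,\lambda_{\max}(\mathcal L))$ producing the eigenvalues $\mu-1$ and $\mu-\lambda_k^2/\lambda_{\max}^2(\mathcal L)$, which Assumption~\ref{assum6} places inside the unit disk. The only differences are bookkeeping: you work directly on the stacked system and obtain the block spectrum via the commuting-block determinant factorization, whereas the paper routes the argument through the abstract class~\eqref{eqn:genestimate} with its additional Assumptions 1--3 and computes the spectrum via the Jordan form of $\hat A$ together with Lemma~\ref{lemma:onion}; you also make explicit the same restriction to the unstable eigenspace of $\hat A$ (the paper's remark before the local sanity check) needed to guarantee $\rho(\hat A-I)<1$.
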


Before proving this theorem we present the following two lemmas that will be used in its proof.

\begin{lemma}
\label{lemma:diagonalize}
Consider a positive semi-definite matrix $B=B^T\in\mathbb{R}^{n\times n}$ and a matrix $S\in\mathbb{R}^{n\times m}$ such that $\mathcal{R}(S)\cap \mathrm{ker}(B)=\{0\}$. The matrix $S^TBS$ is diagonalizable and positive definite.
\end{lemma}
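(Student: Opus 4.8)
The plan is to treat the two conclusions separately, since both reduce to elementary facts about symmetric positive semi-definite matrices. Diagonalizability is immediate: because $B=B^T$, the matrix $M\triangleq S^TBS$ satisfies $M^T=S^TB^TS=S^TBS=M$, so $M$ is real symmetric and the spectral theorem already gives that $M$ is orthogonally diagonalizable with real eigenvalues. It therefore remains only to show that every eigenvalue of $M$ is strictly positive, i.e., that $M$ is positive definite.

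For the positive-definiteness part I would first recall the standard characterization that, for a symmetric positive semi-definite $B$, one has $y^TBy=0$ if and only if $By=0$; indeed, writing $y^TBy=\Vert B^{1/2}y\Vert^2$ with $B^{1/2}$ the symmetric positive semi-definite square root of $B$, vanishing of the left-hand side forces $B^{1/2}y=0$ and hence $By=B^{1/2}(B^{1/2}y)=0$. Applying this with $y=Sx$ gives, for every $x$,
\[
x^TMx=(Sx)^TB(Sx)\ \ge\ 0,\qquad\text{with equality if and only if } Sx\in\mathrm{ker}(B).
\]
Since $Sx\in\mathcal{R}(S)$ for every $x$, the equality case forces $Sx\in\mathcal{R}(S)\cap\mathrm{ker}(B)=\{0\}$, that is, $Sx=0$. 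Hence $x^TMx=0$ holds exactly when $Sx=0$. Using in addition that $S$ has full column rank---so that $Sx=0$ implies $x=0$---we conclude $x^TMx>0$ for all $x\ne 0$, which is the desired positive definiteness; in particular $M$ is invertible.

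The only ingredient that is not entirely routine is the equivalence $y^TBy=0\Leftrightarrow By=0$ for symmetric positive semi-definite $B$; the point that requires care is that the hypothesis $\mathcal{R}(S)\cap\mathrm{ker}(B)=\{0\}$ by itself only yields positive semi-definiteness of $M$, and one additionally invokes injectivity of $S$ (full column rank), which is the case in the setting where this lemma is applied. Beyond bookkeeping of these two points, I do not anticipate any substantive obstacle.
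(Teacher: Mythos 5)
Your proof is correct and follows essentially the same route as the paper's: symmetry of $S^TBS$ gives diagonalizability via the spectral theorem, and positive definiteness is obtained from $x^TS^TBSx=\Vert B^{1/2}Sx\Vert_2^2$ together with $\mathcal{R}(S)\cap\mathrm{ker}(B)=\{0\}$. Your additional remark that injectivity of $S$ must be invoked is well taken: the paper's contradiction argument tacitly assumes $Sx\neq 0$ for $x\neq 0$ (otherwise $Sx=0\in\mathcal{R}(S)\cap\mathrm{ker}(B)$ yields no contradiction), and this full-column-rank property holds in the paper's application only because $S$ is chosen so that $\begin{bmatrix}R & S\end{bmatrix}$ is orthogonal, not as a consequence of the stated hypotheses alone.
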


\begin{proof}
The matrix $S^TBS$ being symmetric implies it is diagonalizable and positive semi-definite since B is also positive semi-definite. Assume, for the sake of contradiction, that it is not positive definite. Then, there exists $x\neq 0\in \mathbb{R}^n$ such that $x^TS^TBSx=0$. Since $x^TS^TBSx=\Vert \sqrt{B}Sx\Vert_2^2$ we conclude that $\sqrt{B}Sx=0$, i.e., $Sx\in \mathrm{ker}(B)$ since any matrix $B$ and its square root $\sqrt{B}$ have the same kernel, a contradiction with $\mathcal{R}(S)\cap \mathrm{ker}(B)=\{0\}$.
\end{proof}

\begin{lemma}
\label{lemma:hurwitzset}
Consider a block matrix $B=\begin{bmatrix} A-I-2k_I^2\lambda^2I_n & -2k_Ik_P\lambda I_n \\ k_I\lambda I_n & A \end{bmatrix}$ where $A\in\mathbb{R}^{n\times n}$ and $\lambda$ is a non-zero eigenvalue of $\mathcal{L}$. There always exist $k_I,k_P\in\mathbb{R}$ such that $B$ is stable\footnote{We recall that a matrix is stable if and only if the magnitude of all its eigenvalues is strictly less than 1.} if $A$ satisfies Assumption~\ref{assum6} in Section~\ref{sec:formulation}.
\end{lemma}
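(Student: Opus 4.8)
The plan is to pass to the characteristic polynomial of $B$, factor it into one quadratic factor per eigenvalue of $A$, and then exploit a cancellation that lets a single pair $(k_I,k_P)$ control all of these factors at once. Since every block of $B$ is a polynomial in $A$ (the off-diagonal blocks are scalar multiples of $I_n$), the bottom-row blocks $-k_I\lambda I_n$ and $zI_n-A$ commute, so the standard identity $\det\begin{bmatrix}W&X\\Y&Z\end{bmatrix}=\det(WZ-XY)$, valid whenever $Y$ and $Z$ commute, yields
\[
\det(zI_{2n}-B)=\det\!\Big(\big((z+1+2k_I^2\lambda^2)I_n-A\big)(zI_n-A)+2k_I^2k_P\lambda^2 I_n\Big).
\]
The matrix inside the last determinant is a polynomial in $A$, so its determinant factors over the spectrum of $A$:
\[
\det(zI_{2n}-B)=\prod_{\mu\in\mathrm{spec}(A)}q_\mu(z),\qquad q_\mu(z)=(z-\mu)^2+(1+2k_I^2\lambda^2)(z-\mu)+2k_I^2k_P\lambda^2,
\]
counted with algebraic multiplicity. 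Hence $B$ is stable if and only if, for every eigenvalue $\mu$ of $A$, both roots of $q_\mu$ lie in the open unit disk.

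The crucial point is that the shift $w=z-\mu$ turns $q_\mu$ into the monic quadratic $w^2+(1+2k_I^2\lambda^2)w+2k_I^2k_P\lambda^2$, whose coefficients --- and therefore whose discriminant $(1+2k_I^2\lambda^2)^2-8k_I^2k_P\lambda^2$ --- do not depend on $\mu$. Consequently the two roots of $q_\mu$ are exactly $\mu+w_1$ and $\mu+w_2$, where $w_1,w_2$ are the $\mu$-independent roots of that quadratic. I would then choose $k_I$ with $k_I^2\lambda^2=\tfrac12$ --- possible since $\lambda\neq0$ --- together with $k_P=1$; a one-line computation shows the discriminant vanishes, so $w_1=w_2=-\tfrac12-k_I^2\lambda^2=-1$, and every eigenvalue of $B$ equals $\mu-1$ for some eigenvalue $\mu$ of $A$.

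To finish, recall that Assumption~\ref{assum6} was already observed in Section~\ref{sec:formulation} to imply $(m-1)^2+n^2<1$ for every eigenvalue $m+n\mathbf{i}$ of $A$, i.e. $|\mu-1|<1$ for all $\mu\in\mathrm{spec}(A)$. Therefore $B$ is stable for this choice of $(k_I,k_P)$, which proves the lemma; since the roots of each $q_\mu$ depend continuously on the parameters, a whole open set of $(k_I,k_P)$ in fact works, which will be convenient when proving Theorem~\ref{thm:main}.

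The step I expect to be the real obstacle --- the one place requiring insight rather than routine algebra --- is recognizing the $\mu$-independence of the discriminant of $q_\mu$, since this is precisely what makes it possible to stabilize all the quadratic factors with a single parameter choice. Without this observation one is pushed toward running a Jury or Schur--Cohn stability test on each $q_\mu$ separately, which is unpleasant: the $q_\mu$ have complex coefficients, and one would then have to intersect the admissible $(k_I,k_P)$-regions over all $\mu$. The remaining ingredients --- the commuting-block determinant identity and the bookkeeping of the $2k_I^2k_P\lambda^2$ term --- are purely mechanical.
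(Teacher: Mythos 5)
Your proof is correct, but it takes a genuinely different route from the paper. The paper Jordan-decomposes $A$, invokes Lemma~\ref{lemma:onion} to reduce stability of $B$ to stability of the $2\times 2$ matrices $R(a,\lambda)=\begin{bmatrix}a-1-2k_I^2\lambda^2 & -2k_Pk_I\lambda\\ k_I\lambda & a\end{bmatrix}$ for each eigenvalue $a$ of $A$, and then solves the resulting quadratics; you instead factor $\det(zI_{2n}-B)$ directly via the commuting-block determinant identity, which bypasses Lemma~\ref{lemma:onion} altogether and makes the $\mu$-independence of the shifted quadratic transparent --- arguably a cleaner argument, and your identity $\mathrm{spec}(B)=\{\mu+w_{1},\mu+w_{2}:\mu\in\mathrm{spec}(A)\}$ subsumes the paper's eigenvalue computation. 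One caveat worth flagging: your choice $k_I^2\lambda^2=\tfrac12$ depends on $\lambda$, which is fine for the lemma as stated (a single fixed $\lambda$), but the proof of Proposition~\ref{prop:cons} later needs a \emph{single} pair $(k_I,k_P)$ stabilizing $B$ simultaneously for \emph{all} non-zero eigenvalues $\lambda$ of $\mathcal{L}$; your continuity remark does not by itself yield such a common choice. The paper handles this by taking $k_P=1$, $k_I=\frac{1}{\sqrt{2}\lambda_{\max}(\mathcal{L})}$, which gives eigenvalues $a-1$ and $a-\frac{\lambda^2}{\lambda_{\max}^2(\mathcal{L})}$ --- precisely the two quantities Assumption~\ref{assum6} controls uniformly in $\lambda$. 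Your factorization recovers this immediately (with $k_P=1$ the shifted roots are $w_1=-2k_I^2\lambda^2$, $w_2=-1$), so the fix is a one-line change of parameter choice rather than a new idea, but as written your specific choice would not serve the downstream ``universal'' claim.
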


To prove this result, we first invoke the following lemma:

\begin{lemma}
\label{lemma:onion}
Consider a block square matrix $H=\begin{bmatrix}
A & B \\ C & D
\end{bmatrix}$, where $A,B,C,D\in\mathbb{R}^{r\times r}$ and $r\in\mathbb{N}$ satisfy the following properties: (1) blocks $A$ and $D$ are upper-triangular and have $a_1,a_2,\dots,a_r$ and $d_1,d_2,\dots,d_r$ as their diagonal entries respectively, and (2) blocks $B$ and $C$ are diagonal and have $b_1,b_2,\dots,b_r$ and $c_1,c_2,\dots,c_r$ as their diagonal entries respectively. The matrix $H$ is stable if every $2\times 2$ matrix $\begin{bmatrix}
a_i & b_i \\ c_i & d_i
\end{bmatrix}$ is stable for $i$ ranging from 1 to $r$.
\end{lemma}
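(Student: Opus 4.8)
The plan is to exhibit a permutation similarity that turns $H$ into a \emph{block upper-triangular} matrix whose $2\times 2$ diagonal blocks are exactly the matrices $M_i\triangleq\begin{bmatrix} a_i & b_i \\ c_i & d_i\end{bmatrix}$. Once this is done, the conclusion is immediate: the spectrum of a block triangular matrix is the union of the spectra of its diagonal blocks, similarity preserves eigenvalues, and each $M_i$ is stable by hypothesis, so every eigenvalue of $H$ has magnitude strictly less than $1$.

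Concretely, I would let $\sigma$ be the permutation of $\{1,\dots,2r\}$ with $\sigma(i)=2i-1$ and $\sigma(r+i)=2i$ for $1\le i\le r$, and let $P$ be the associated permutation matrix, so that $\tilde H\triangleq PHP^T$ satisfies $\tilde H_{\sigma(a),\sigma(b)}=H_{a,b}$; in particular the $2\times 2$ block of $\tilde H$ sitting in block-row $i$ and block-column $j$ (for $1\le i,j\le r$) is $\begin{bmatrix} A_{ij} & B_{ij} \\ C_{ij} & D_{ij}\end{bmatrix}$. I would then check three cases. For $i>j$: upper-triangularity of $A$ and $D$ gives $A_{ij}=D_{ij}=0$, while $B$ and $C$ being diagonal gives $B_{ij}=C_{ij}=0$, so the block vanishes. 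For $i=j$: the block is precisely $M_i$. For $i<j$: diagonality of $B$ and $C$ forces $B_{ij}=C_{ij}=0$, leaving $\mathrm{diag}(A_{ij},D_{ij})$, which lies strictly above the block diagonal and is irrelevant to the spectrum. Hence $\tilde H$ is block upper-triangular with diagonal blocks $M_1,\dots,M_r$, so
$$\det(\lambda I_{2r}-H)=\det(\lambda I_{2r}-\tilde H)=\prod_{i=1}^r\det(\lambda I_2-M_i),$$
and every eigenvalue of $H$ is an eigenvalue of some $M_i$; stability of all the $M_i$ then yields stability of $H$.

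The only delicate point is the bookkeeping in the middle paragraph — verifying that the interlacing permutation really produces the claimed $2\times 2$ block pattern and, crucially, that the sub-diagonal blocks vanish. This is a pure consequence of the hypotheses (upper-triangularity of $A,D$ and diagonality of $B,C$), so I do not expect a genuine obstacle: the matrix ``peels'' into $2\times 2$ layers along the diagonal, which is presumably the intended meaning of the name. An alternative would be an induction on $r$, splitting off one block $M_i$ at a time via a Schur-complement/cofactor expansion, but the permutation argument is shorter and avoids repeated determinant manipulations.
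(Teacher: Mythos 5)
Your argument is correct and is essentially the paper's: the paper performs exactly this interleaving permutation similarity in the case $r=2$ (its matrix $T$ swaps the middle two coordinates) and defers general $r$ to an omitted induction, while you carry out the permutation directly for arbitrary $r$, which in fact supplies the bookkeeping the paper skips. Your observation that the permuted matrix is block \emph{upper-triangular} rather than block diagonal — the strictly upper entries of $A$ and $D$ survive above the block diagonal, but the sub-diagonal blocks vanish by upper-triangularity of $A,D$ and diagonality of $B,C$ — is the precise statement; the paper's $r=2$ display calls $\hat H$ block diagonal even though $a_0,d_0$ remain, though the spectral conclusion (spectrum of $H$ equals the union of the spectra of the $M_i$) is unaffected.
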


\begin{proof}
We will only prove the special case when $r=2$ since the rest of the proof follows by an induction argument, which we omit for the sake of brevity.

We explicitly write out the matrix $H$: $$H=\left[
\begin{array}{cc|cc} 
    a_1 & a_0 & b_1 & \\ 
        & a_2 &     & b_2 \\ \hline
    c_1 &     & d_1 & d_0 \\ 
        & c_2 &     & d_2 \\
\end{array}
\right],$$
and compare with another matrix $\hat H$:
$$\hat H=\left[
\begin{array}{cc|cc} 
    a_1 & b_1 & a_0 & \\ 
    c_1 & d_1 &     & d_0 \\ \hline
        &     & a_2 & b_2 \\ 
        &     & c_2 & d_2 \\
\end{array}
\right].$$ We observe that $\hat H$ can be obtained from $H$ by a similarity transformation: $\hat H=T^{-1}HT$, where $$T = T^{-1}= \begin{bmatrix}1&0&0&0\\0&0&1&0\\0&1&0&0\\0&0&0&1\end{bmatrix}.$$ In other words, matrices $H$ and $\hat H$ share the same set of eigenvalues. Moreover, we obtain that $\hat H$ is a block diagonal matrix with both of its diagonal blocks $\begin{bmatrix}
a_1 & b_1 \\ c_1 & d_1
\end{bmatrix}$ and $\begin{bmatrix}
a_2 & b_2 \\ c_2 & d_2
\end{bmatrix}$ being stable. This shows $\hat H$ is a stable matrix, which finishes our proof.

\end{proof}

With the help of Lemma~\eqref{lemma:onion} we are ready to prove that $B$ can be made stable by properly choosing $c$ and $\lambda$.

\vspace{1.5mm}
\noindent\textit{Proof of Lemma \ref{lemma:hurwitzset}.}
We first find the Jordan decomposition of $A$. Let $J=T^{-1}AT$ be the Jordan form of $A$, $J$ is an upper-triangular matrix with entries $a_1,a_2,\dots,a_n$ on its diagonal. Instead of matrix $B$, we focus on the following matrix which has the same eigenvalues as $B$:

\begin{eqnarray*}
&& \begin{bmatrix}T^{-1}& \\ & T^{-1}\end{bmatrix} \begin{bmatrix} A-I-2k_I^2\lambda^2I_n & -2k_Pk_I\lambda I_n \\ k_I\lambda I_n & A \end{bmatrix} \begin{bmatrix}T& \\ & T\end{bmatrix} \\
&=& \begin{bmatrix} J-I-2k_P^2\lambda^2I_n & -2k_Pk_I\lambda I_n \\ k_P\lambda I_n & J \end{bmatrix}.
\end{eqnarray*}

Now we invoke Lemma~\eqref{lemma:onion} and obtain that $B$ is stable if the matrix $R(a,\lambda)=\begin{bmatrix}a-1-2k_I^2\lambda^2 & -2k_Pk_I\lambda \\ k_I\lambda & a\end{bmatrix}$ is stable for any $a\in\{a_1,a_2,\dots,a_n\}$ being an eigenvalue of $A$ and any $\lambda$ being a non-zero eigenvalue of $\mathcal{L}$. In the next step we compute the eigenvalues $k_1,k_2$ of matrix $R(a,\lambda)$:
\begin{eqnarray} \notag
&&(a-1-2k_I^2\lambda^2-k)(a-k)+2k_Pk_I^2\lambda^2=0 \\ \notag
\Rightarrow && k^2+(2k_I^2\lambda^2-2a+1)k+(2k_pk_I^2\lambda^2+a^2-a-2ak_I^2\lambda^2) =0.
\end{eqnarray}
The eigenvalues of the matrix $R(a,\lambda)$ are given by:
\begin{eqnarray*}
2k_1 &=& -2k_I^2\lambda^2+2a-1 + \sqrt{(2k_I^2\lambda^2+1)^2-8k_Pk_I^2\lambda^2}, \\
2k_2 &=& -2k_I^2\lambda^2+2a-1 - \sqrt{(2k_I^2\lambda^2+1)^2-8k_Pk_I^2\lambda^2}.
\end{eqnarray*}

Lastly, we provide a possible choice of $\lambda$ and $c$ such that for any aforementioned $a$ and $\lambda$, both eigenvalues $k_1$ and $k_2$ are stable. We pick $k_P=1$ and $k_I=\frac{1}{\sqrt{2}\lambda_{\max}(\mathcal{L})}$, where we recall that $\lambda_{\max}(\mathcal{L})$ is the largest eigenvalue of $\mathcal{L}$. This choice leads to: $$k_1=a-1,~k_2=a-\frac{\lambda^2}{\lambda_{\max}^2(\mathcal{L})}.$$ By Assumption~\ref{assum6} both $k_1$ and $k_2$ are stable\footnote{Again, we recall that a eigenvalue is stable if its magnitude is strictly less than 1.} for any feasible choice of $a$ and $\lambda$, which finishes our proof.\hfill\qedsymbol
\vspace{1.5mm}

Instead of proving directly Theorem~\ref{thm:main}, for the sake of simplicity, we prove the convergence result for a broader class of systems in the form of:
\begin{equation}
\label{eqn:genestimate}
\left\{
\begin{aligned}
v[t+1] & = && (G-I)v[t] + \varphi[t] - 2k_IF\eta[t], \\
m[t+1] & = && Gm[t] + k_IFv[t],\\
\eta[t] & = && k_Pm[t]+ k_IFv[t],
\end{aligned}
\right.
\end{equation}
where $k_I,k_P\in \mathbb{R}$, $v,m,\varphi,\eta\in\mathbb{R}^a$ for some positive $a\in\mathbb{N}$, $F,G\in\mathbb{R}^{a\times a}$, $F=F^T$ is positive semi-definite, and $I$ represents the identity matrix of order $a$. 

We first show that tracking algorithms of the form of~\eqref{eqn:genestimate} can be decoupled into 2 sub-systems and then we analyze the convergence properties for each sub-system. To do this, we introduce a matrix $R\in\mathbb{R}^{a\times \mathrm{dim}(\mathrm{ker}F)}$ such that $\mathcal{R}(R)=\mathrm{ker}(F)$. Moreover, we pick $S\in\mathbb{R}^{a\times (a-\mathrm{dim}(\mathrm{ker}F))}$ such that the matrix $\begin{bmatrix}R & S\end{bmatrix}\in\mathbb{R}^{a\times a}$ is orthogonal.

We make the following three additional assumptions regarding matrices $F$ and $G$: 
\begin{itemize}
\item \textbf{Assumption 1:} The external input $\varphi[t]$ satisfies $\varphi[t+1]=G\varphi[t]$, where $G$ has the following two properties:
\begin{enumerate}
    \item All eigenvalues of $G$ satisfy Assumption~\ref{assum6} in Section~\ref{sec:formulation}.
    \item The equations $R^TGS=0$ and $S^TGR=0$ hold.
\end{enumerate}
\item \textbf{Assumption 2:}  All eigenvalues of $F$ coincide with the eigenvalues of $\mathcal{L}$.
\item \textbf{Assumption 3:} The matrices $F$ and $G$ are simultaneously block diagonalizable and each block in $G$ corresponds to an identity block (possibly scaled by a scalar $\lambda_i$) in $F$. In other words, there exists an invertible matrix $P$ such that $P^{-1}GP=\text{diag}\{\Lambda_1,\dots,\Lambda_r\}$ and $P^{-1}FP=\text{diag}\{\lambda_1I_{s_1},\dots,\lambda_rI_{s_r}\}$ such that $\Lambda_i\in\mathbb{R}^{s_i\times s_i}$ with $i$ ranging from $1$ to $r$.
\end{itemize}

Note that all the additional assumptions are introduced to prove the desired tracking property and any system of the form \eqref{eqn:localestimate} automatically falls into the broader class of systems described in~\eqref{eqn:genestimate} and the additional Assumptions 1-3, as we will argue at the end of this section. We are now ready to present our decomposition result based on the change of coordinates $(z_1,z_2)=(R^T,S^T)v$ where $z_1$ describes the sum of local estimates and $z_2$ is the difference among local estimates. We see that if $z_1$ converges to $R^T\varphi$ and $z_2$ converges to $0$, then all nodes reach the consensus $\varphi$ which is the target of the tracking algorithm.

\begin{proposition}
\label{prop:decom}
The dynamical system in~\eqref{eqn:genestimate} can be decoupled into the following two sub-systems if it satisfies additional Assumption 1:
\begin{equation}
\label{eqn:direction1}
z_1[t+1]=R^T(G-I)Rz_1[t]+R^T\varphi[t],
\end{equation}
and:
\begin{equation}
\label{eqn:direction2}
\left\{
\begin{aligned}
z_2[t+1] & = && S^T(G-I)Sz_2[t] - 2k_IS^TFSf_2[t] + S^T\varphi[t], \\ g_2[t+1] & = && S^TGSg_2[t] + k_IS^TFSz_2[t],\\
 f_2[t] &= && k_Pg_2[t]+ k_IS^TFSz_2[t],
\end{aligned}
\right.
\end{equation}
where $z_1=R^Tv$, $z_2=S^Tv$, $f_2=S^T\eta$ and $g_2=S^Tm$.
\end{proposition}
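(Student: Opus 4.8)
The plan is to perform the change of coordinates $(z_1,z_2) = (R^Tv, S^Tv)$ induced by the orthogonal matrix $\begin{bmatrix}R & S\end{bmatrix}$, together with the companion changes $(f_1,f_2) = (R^T\eta, S^T\eta)$ and $(g_1,g_2) = (R^Tm, S^Tm)$, and then to show that the $z_1$-block and the $z_2$-block decouple. The starting point is that $\begin{bmatrix}R & S\end{bmatrix}$ is orthogonal, so $RR^T + SS^T = I$, $R^TR = I$, $S^TS = I$, $R^TS = 0$, and $v[t] = Rz_1[t] + Sz_2[t]$ for all $t$. I would substitute this decomposition into each of the three equations of~\eqref{eqn:genestimate} and left-multiply by $R^T$ and by $S^T$ in turn.

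The key algebraic facts that make the decoupling work are (i) $\mathcal{R}(R) = \ker(F)$, which gives $FR = 0$ and hence also $R^TF = (FR)^T = 0$, so every term involving $F$ disappears from the $R^T$-projection; and (ii) the relations $R^TGS = 0$ and $S^TGR = 0$ from additional Assumption~1(2), which prevent the $G$-dynamics from mixing the two coordinates. Concretely: left-multiplying the $v$-equation by $R^T$ gives $z_1[t+1] = R^T(G-I)v[t] + R^T\varphi[t] - 2k_IR^TF\eta[t]$; the last term vanishes because $R^TF = 0$, and $R^T(G-I)v[t] = R^T(G-I)(Rz_1[t] + Sz_2[t]) = R^T(G-I)Rz_1[t]$ because $R^T(G-I)S = R^TGS - R^TS = 0 - 0 = 0$. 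This yields~\eqref{eqn:direction1}. Left-multiplying by $S^T$ instead, and using $S^T(G-I)R = S^TGR - S^TR = 0$, leaves $z_2[t+1] = S^T(G-I)Sz_2[t] - 2k_IS^TF\eta[t] + S^T\varphi[t]$; then writing $\eta[t] = Rf_1[t] + Sf_2[t]$ (equivalently inserting $I = RR^T + SS^T$) and using $S^TFR = (FS)^T\cdots$ — more simply, $S^TF\eta[t] = S^TF(RR^T+SS^T)\eta[t] = S^TFSS^T\eta[t] = S^TFSf_2[t]$ since $FR=0$ — gives the first line of~\eqref{eqn:direction2}. The $m$-equation and the $\eta$-equation are handled identically: project with $S^T$, insert $I = RR^T+SS^T$ in front of $v[t]$ and $m[t]$, and use $S^TGR = 0$, $FR = 0$, $R^T v = z_1$, $S^T v = z_2$, $S^T m = g_2$ to obtain the stated expressions for $g_2[t+1]$ and $f_2[t]$. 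One should also note in passing that $\varphi[t+1] = G\varphi[t]$ together with $R^TGS = S^TGR = 0$ guarantees the decomposition $\varphi = R(R^T\varphi) + S(S^T\varphi)$ is itself propagated consistently, though this is not strictly needed for the statement.

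The main obstacle — really the only place requiring care — is bookkeeping the cross terms: one must check that every occurrence of $G$ that could couple $z_1$ and $z_2$ is killed by exactly one of $R^TGS = 0$ or $S^TGR = 0$, and that every occurrence of $F$ acting on the $R$-component is killed by $FR = 0$. It is worth being explicit that $R^TF = 0$ follows from $FR = 0$ by symmetry of $F$ (since $F = F^T$, $R^TF = R^TF^T = (FR)^T = 0$), because the feedback term $-2k_IF\eta$ in the $v$-equation and the term $k_IFv$ in the $m$- and $\eta$-equations each need this. I would also remark that additional Assumptions~2 and~3 play no role here — they are only needed later for the stability analysis of~\eqref{eqn:direction2} via Lemma~\ref{lemma:hurwitzset} — so the proof of this proposition uses solely the orthogonality of $\begin{bmatrix}R & S\end{bmatrix}$, the identity $\mathcal{R}(R) = \ker(F)$, and additional Assumption~1. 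The whole argument is thus a short, if slightly tedious, projection computation, and I would present it by deriving~\eqref{eqn:direction1} first and then the three lines of~\eqref{eqn:direction2} in order.
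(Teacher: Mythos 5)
Your proposal is correct and follows essentially the same route as the paper's proof: the change of coordinates induced by the orthogonal matrix $\begin{bmatrix}R & S\end{bmatrix}$, projection by $R^T$ and $S^T$, and elimination of the cross terms via $FR=0$, $R^TF=0$ (by symmetry of $F$), $R^TS=0$, and the relations $R^TGS=S^TGR=0$ from additional Assumption~1. Your observation that only Assumption~1 (and not Assumptions~2--3) is needed here matches the paper's statement as well.
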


\begin{proof}
Let $z=\begin{bmatrix}z_1^T & z_2^T\end{bmatrix}^T=\begin{bmatrix}R & S\end{bmatrix}^Tv$, $g=\begin{bmatrix}g_1^T & g_2^T\end{bmatrix}^T=\begin{bmatrix}R & S\end{bmatrix}^Tm$, and $f=\begin{bmatrix}f_1^T & f_2^T\end{bmatrix}^T=\begin{bmatrix}R & S\end{bmatrix}^T\eta$, which in turn means $v=\begin{bmatrix}R & S\end{bmatrix}z$, $m=\begin{bmatrix}R & S\end{bmatrix}g$, and $\eta=\begin{bmatrix}R & S\end{bmatrix}f$. By substituting these changes of coordinates into~\eqref{eqn:genestimate} and also left multiplying $\begin{bmatrix}R & S\end{bmatrix}^T$ on both sides of Equation \eqref{eqn:genestimate} we obtain:
\begin{equation}
\left\{
\begin{aligned}
z[t+1] & = && \begin{bmatrix}R^T \\ S^T \end{bmatrix}(G-I)\begin{bmatrix}R & S\end{bmatrix}z[t] - 2k_I\begin{bmatrix}R^T \\ S^T \end{bmatrix}F\begin{bmatrix}R & S\end{bmatrix}f[t] + \begin{bmatrix}R^T \\ S^T \end{bmatrix}\varphi[t],  \\
g[t+1] & = && \begin{bmatrix}R^T \\ S^T \end{bmatrix}G\begin{bmatrix}R & S\end{bmatrix}g[t] + k_I\begin{bmatrix}R^T \\ S^T \end{bmatrix}F\begin{bmatrix}R & S\end{bmatrix}z[t],\\
f[t] & = && k_P\begin{bmatrix}R^T \\ S^T \end{bmatrix}\begin{bmatrix}R & S\end{bmatrix}g[t]+ k_I\begin{bmatrix}R^T \\ S^T \end{bmatrix}F\begin{bmatrix}R & S\end{bmatrix}z[t].
\end{aligned}
\right.
\end{equation}
Recall that $R^TF=0$, $FR=0$, by additional Assumption 1 we have $R^TGS=0$ and $S^TGR=0$, and by construction $R^TS=0$, after some manipulations we obtain:
\begin{equation}
z_1[t+1]=R^T(G-I)Rz_1[t]+R^T\varphi[t],
\end{equation}
and
\begin{equation}
\left\{
\begin{aligned}
z_2[t+1] & = && S^T(G-I)Sz_2[t] - 2k_IS^TFSf_2[t] + S^T\varphi[t], \\
g_2[t+1] & = && S^TGSg_2[t] + k_IS^TFSz_2[t],\\
 f_2[t] &= && k_Pg_2[t]+ k_IS^TFSz_2[t].
\end{aligned}
\right.
\end{equation}
as desired.
\end{proof}

The following proposition shows that $z_1$ converges to $R^T\varphi$ as desired.

\begin{proposition}
\label{prop:avg}
For any solution $z_1[t]$ of~\eqref{eqn:direction1} we have: $$\left\Vert z_1[t]-R^T\varphi[t]\right\Vert<\beta \alpha^t,$$ for some $0<\alpha<1$ and $\beta>0$.
\end{proposition}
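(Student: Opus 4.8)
The plan is to show that the dynamics~\eqref{eqn:direction1} governing $z_1$ are a stable linear system driven by an exogenous signal $R^T\varphi[t]$ that is itself produced by a stable mode, and that the particular solution synchronizing with $R^T\varphi[t]$ is precisely the fixed trajectory; the homogeneous part then decays geometrically. First I would introduce the tracking error $\delta[t] = z_1[t] - R^T\varphi[t]$ and derive its dynamics. Since $z_1[t+1] = R^T(G-I)Rz_1[t] + R^T\varphi[t]$, and since $\varphi[t+1] = G\varphi[t]$ by additional Assumption~1, I compute
\begin{equation}
\notag
\delta[t+1] = R^T(G-I)Rz_1[t] + R^T\varphi[t] - R^TG\varphi[t].
\end{equation}
The key algebraic step is to rewrite $R^TG\varphi[t]$ using the orthogonal decomposition $I = RR^T + SS^T$ (valid because $\begin{bmatrix}R & S\end{bmatrix}$ is orthogonal), so that $R^TG\varphi[t] = R^TG(RR^T + SS^T)\varphi[t] = R^TGRR^T\varphi[t]$, where the $SS^T$ term vanishes by the identity $R^TGS = 0$ from additional Assumption~1. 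This yields $\delta[t+1] = R^T(G-I)R\,\delta[t]$, a clean homogeneous recursion.

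It then remains to show that $M \triangleq R^T(G-I)R$ is a stable matrix. The columns of $R$ span $\mathrm{ker}(F)$, and by additional Assumption~2 together with the structure in additional Assumption~3, $\mathrm{ker}(F)$ corresponds to the zero eigenvalue of $\mathcal{L}$, i.e., to the consensus (averaging) directions; hence $M$ is the restriction of $G-I$ to the $G$-invariant subspace $\mathcal{R}(R)$. Because every eigenvalue $m+n\mathbf{i}$ of $G$ satisfies Assumption~\ref{assum6}, which as noted in Section~\ref{sec:formulation} implies $(m-1)^2 + n^2 < 1$, each eigenvalue of $G - I$ has magnitude strictly less than $1$; since the eigenvalues of $M$ form a subset of those of $G-I$ (as $M$ represents $G-I$ on an invariant subspace, using $R^TGS = 0$ to see that $\begin{bmatrix}R & S\end{bmatrix}^T G \begin{bmatrix}R & S\end{bmatrix}$ is block upper-triangular with $M + I$ as one diagonal block), $M$ is stable. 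Consequently there exist $0 < \alpha < 1$ and $\beta > 0$ with $\Vert M^t\Vert < \beta\alpha^t$, and $\delta[t] = M^t\delta[0]$ gives $\Vert z_1[t] - R^T\varphi[t]\Vert = \Vert M^t\delta[0]\Vert < \beta'\alpha^t$ with $\beta' = \beta\Vert\delta[0]\Vert$ (absorbing the constant, or taking $\beta$ to depend on the initial condition as in the statement).

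The main obstacle I anticipate is not the decay argument itself but making rigorous the claim that $\mathrm{ker}(F)$ is exactly the eigenspace of $\mathcal{L}$ at eigenvalue zero and that $G$ restricted to this subspace still has all eigenvalues obeying Assumption~\ref{assum6}: one must carefully track how additional Assumptions~2 and~3 interact, since a priori $F$ could have a zero eigenvalue of higher multiplicity than $\mathcal{L}$ does, or $G$ could fail to leave $\mathrm{ker}(F)$ invariant. The conditions $R^TGS = 0$ and $S^TGR = 0$ in additional Assumption~1 are precisely what guarantee invariance of both $\mathcal{R}(R)$ and $\mathcal{R}(S)$ under $G$, so I would lean on those identities explicitly rather than on eigenvalue multiplicities. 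Once invariance is in hand, the eigenvalues of $M+I$ being a sub-multiset of those of $G$ is immediate from block-triangularity, and the rest is the standard fact that a discrete-time linear system with a stable system matrix converges to zero exponentially.
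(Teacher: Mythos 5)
Your proposal is correct and follows essentially the same route as the paper's proof: the same error variable $z_1[t]-R^T\varphi[t]$, the same reduction to the homogeneous recursion via $R^TG=R^TGRR^T$ (using $R^TGS=0$ and orthogonality of $\begin{bmatrix}R & S\end{bmatrix}$), and the same similarity/block-structure argument showing the eigenvalues of $R^T(G-I)R$ are a subset of those of $G-I$, which are stable by Assumption~\ref{assum6}. No gaps to report.
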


\begin{proof}
Let $\hat z_1[t]=z_1[t]-R^T\varphi[t]$, we have the following sequence of equations:
\begin{eqnarray}
{\hat z}_1[t+1] & = & z_1[t+1] - R^T \varphi[t+1] \\ \notag
&=& R^T(G-I)Rz_1[t] + R^T\varphi[t] - R^TG\varphi[t]  \\ \notag
&=& R^T(G-I)Rz_1[t] + R^T(I-G)\varphi[t] \\ \notag
&=& R^T(G-I)Rz_1[t] + R^T(I-G)RR^T\varphi[t] \\ \notag
&=& R^T(G-I)R \hat z_1[t],
\end{eqnarray}
where the fourth step comes from: 
\begin{equation}
R^TG = R^TG\begin{bmatrix}R & S\end{bmatrix}\begin{bmatrix}R^T \\ S^T\end{bmatrix}=\begin{bmatrix}R^TGR & 0\end{bmatrix}\begin{bmatrix}R^T \\ S^T\end{bmatrix} = R^TGRR^T,
\end{equation}
since $R^TGS=0$ by additional Assumption 1, and by construction of $R$ we have $R^TRR^T=R^T$.

Finally, we show that the all eigenvalues of $R^T(G-I)R$ are stable, in other words, their magnitudes are strictly less than 1. To do this, we first note that if $\lambda'=m-1+n\textbf{i}$ is an eigenvalue of $G-I$, then $\lambda=m+n\textbf{i}$ must be an eigenvalue of of $G$. By additional Assumption 3 and Assumption~\ref{assum6}, for any $\lambda=m+n\textbf{i}$ of $G$, we have $\Vert \lambda'\Vert_2^2=(m-1)^2+n^2<1$, which shows that all eigenvalues $\lambda'$ of $G-I$ are stable.

Then we resort to the 
following similarity transformation:
\begin{eqnarray*}
\begin{bmatrix} R^T \\ S^T \end{bmatrix} (G-I)\begin{bmatrix} R & S \end{bmatrix}&=&\begin{bmatrix} R^T(G-I)R & R^T(G-I)S \\ S^T(G-I)R & S^T(G-I)S \end{bmatrix} \\
&=&\begin{bmatrix}R^T(G-I)R & 0 \\ 0 & S^T(G-I)S\end{bmatrix},
\end{eqnarray*}
which shows that the eigenvalues of $R^T(G-I)R$ are a subset of eigenvalues of $G-I$. Since all the eigenvalues of $G-I$ are stable, all eigenvalues of $R^T(G-I)R$ are also stable and hence $\hat z_1$ converges to 0 exponentially fast, or $z_1$ converges to $R^T\varphi$ exponentially fast.
\end{proof}

Similarly, we prove that $z_2$ converges to $0$ in the following proposition.
\begin{proposition}
\label{prop:cons}
For any solution $(z_2[t],g_2[t])$ of~\eqref{eqn:direction2} we have: $$\left\Vert z_2[t]\right\Vert<\beta \alpha^t,$$ for some $0<\alpha<1$ and $\beta>0$ if the system defined in~\eqref{eqn:genestimate} satisfies additional Assumptions 1 and 2.
\end{proposition}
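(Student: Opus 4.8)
The idea is to strip from \eqref{eqn:direction2} the forcing term $S^T\varphi[t]$ --- which, contrary to what one might expect, does \emph{not} decay since it is propagated by the (possibly unstable) part of $G$ --- by subtracting a suitable particular solution, and then to show that the remaining homogeneous recursion is governed by a stable matrix. Two preliminary observations are needed. First, by construction $\mathcal{R}(S)$ is the orthogonal complement of $\mathcal{R}(R)=\mathrm{ker}(F)$, so $\mathcal{R}(S)\cap\mathrm{ker}(F)=\{0\}$, and Lemma~\ref{lemma:diagonalize} (applied with $B=F$) shows that $S^TFS$ is positive definite, hence invertible. Second, by additional Assumption~3 the matrices $F$ and $G$ are simultaneously block diagonalizable with the $F$-blocks being scalar multiples of the identity; in particular $F$ and $G$ commute and, since $\mathcal{R}(S)$ is invariant under both, the compressions $S^TFS$ and $S^TGS$ commute as well.

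Next I would exhibit the particular solution $\bar z_2[t]\equiv 0$, $\bar g_2[t]=\tfrac{1}{2k_Ik_P}(S^TFS)^{-1}S^T\varphi[t]$, $\bar f_2[t]=k_P\bar g_2[t]$ of \eqref{eqn:direction2}. The $z_2$- and $f_2$-equations hold by the very definition of $\bar g_2$ and $\bar f_2$; the only nontrivial verification is the $g_2$-equation, which reduces to $(S^TFS)^{-1}S^TG\varphi[t]=S^TGS\,(S^TFS)^{-1}S^T\varphi[t]$. This follows from $\varphi[t+1]=G\varphi[t]$ (additional Assumption~1), the identity $S^TG=S^TGS\,S^T$ --- obtained from $S^TGR=0$ (additional Assumption~1) and $RR^T+SS^T=I$, exactly as the analogous identity $R^TG=R^TGR\,R^T$ was derived in the proof of Proposition~\ref{prop:avg} --- and the commutativity of $S^TFS$ and $S^TGS$ established above.

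Subtracting this particular solution, the error $(\tilde z_2,\tilde g_2):=(z_2,\,g_2-\bar g_2)$ satisfies, after eliminating $f_2$ and using $S^TS=I$, the homogeneous recursion
\[
\begin{bmatrix}\tilde z_2\\ \tilde g_2\end{bmatrix}[t+1]
=\underbrace{\begin{bmatrix} S^TGS-I-2k_I^2(S^TFS)^2 & -2k_Ik_P\,S^TFS\\[2pt] k_I\,S^TFS & S^TGS\end{bmatrix}}_{\mathcal M}
\begin{bmatrix}\tilde z_2\\ \tilde g_2\end{bmatrix}[t].
\]
It then suffices to pick $k_I,k_P$ so that $\mathcal M$ is stable. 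Using additional Assumptions~2 and~3 I would conjugate $\mathcal M$ into block-diagonal form by simultaneously block-diagonalizing $S^TFS$ and $S^TGS$ on $\mathcal{R}(S)$: because the $\mathrm{ker}(F)$ part has already been removed by $S$, every surviving block of $S^TFS$ equals $\lambda_i I$ for a \emph{non-zero} eigenvalue $\lambda_i$ of $\mathcal L$, paired with a block $\Lambda_i$ of $S^TGS$ whose eigenvalues satisfy Assumption~\ref{assum6} (additional Assumption~1). Each resulting diagonal block of $\mathcal M$ is then exactly the matrix of Lemma~\ref{lemma:hurwitzset} with $A=\Lambda_i$ and $\lambda=\lambda_i$, so the uniform choice $k_P=1$, $k_I=1/(\sqrt2\,\lambda_{\max}(\mathcal L))$ from the proof of that lemma renders every block, hence $\mathcal M$, stable. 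Consequently $\Vert(\tilde z_2[t],\tilde g_2[t])\Vert\le\beta\alpha^t$ for some $0<\alpha<1$ and $\beta>0$, and in particular $\Vert z_2[t]\Vert=\Vert\tilde z_2[t]\Vert\le\beta\alpha^t$, which is the claim.

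The step I expect to be the main obstacle is recognizing --- and proving --- that the steady state of $z_2$ is \emph{exactly} zero even though the disagreement component $S^T\varphi[t]$ of the reference does not vanish; this hinges entirely on the commutativity of $S^TFS$ and $S^TGS$, which is why additional Assumption~3 is indispensable here, as without it the clean particular solution fails to close and $z_2$ would retain a genuinely nonvanishing forced component. A secondary, more bookkeeping-heavy point is keeping track of the two successive coordinate changes --- the $R/S$ splitting and the block-diagonalization of $F$ and $G$ --- so that $\mathcal M$ aligns precisely with the hypothesis of Lemma~\ref{lemma:hurwitzset}.
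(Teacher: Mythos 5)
Your proof is correct and follows essentially the same route as the paper's: exhibit the particular solution $\bigl(0,\tfrac{1}{2k_Pk_I}(S^TFS)^{-1}S^T\varphi\bigr)$, pass to the homogeneous error dynamics governed by the block matrix (the paper's $B$, your $\mathcal M$), and render it stable by simultaneously block-diagonalizing $S^TFS$ and $S^TGS$ and invoking Lemma~\ref{lemma:hurwitzset} with $k_P=1$, $k_I=1/(\sqrt{2}\,\lambda_{\max}(\mathcal L))$. Your remark that the commutativity argument really rests on additional Assumption~3 is also in line with the paper's own proof, which uses that assumption (while nominally citing Assumption~2) even though the proposition's hypothesis lists only Assumptions~1 and~2.
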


\begin{proof}
By substituting $f_2$ by $k_Pg_2+ k_IS^TFSz_2$ we obtain the following dynamics:
\begin{equation} \notag 
\label{eqn:consensus}
\begin{bmatrix} z_2[t+1] \\ g_2[t+1] \end{bmatrix}=\underbrace{\begin{bmatrix}G'-I-2k_I^2F'^2 & -2k_Pk_IF' \\ k_IF' & G'\end{bmatrix}}_{B}\begin{bmatrix} z_2[t] \\ g_2[t]\end{bmatrix} 
+ \begin{bmatrix}S^T\\ 0\end{bmatrix}\varphi[t],
\end{equation}
where $G'=S^TGS$ and $F'=S^TFS$. It is simple to see that, if $B$ is stable, for any two solutions $w$ and $w'$ of~\eqref{eqn:consensus}, we have $\lim_{t\to\infty}(w[t]-w'[t])=0$, since $z=w-w'$ satisfies $z[t+1]=Bz[t]$. In other words, if we can prove: $\textit{1}.$ there exists a function $\hat{g}_2[t]:\mathbb{R}\rightarrow \mathbb{R}^{a-\mathrm{dim}(\mathrm{ker}F)}$ such that $(0,\hat{g}_2)$ is a solution of ~\eqref{eqn:consensus}, and $\textit{2}.$ the matrix $B$ is Hurwitz, then the proposition is proved.

First we prove that $(0, \hat g_2[t])$ is the solution of~\eqref{eqn:consensus} for some $\hat g_2[t]$. By substituting $z_2=0$ into~\eqref{eqn:consensus} we obtain:
\begin{equation}
\label{eqn:simplifyeq}
\left\{
\begin{aligned}
S^T\varphi[t] &=&& 2k_Pk_IF'\hat g_2[t],\\
{\hat g}_2[t] &=&& G' \hat g_2[t].
\end{aligned}
\right.
\end{equation}
By Lemma~\ref{lemma:diagonalize} and the construction of $S$ we obtain that $F'$ is an invertible matrix and hence $(\hat z_2=0,\hat g_2[t]=\frac{1}{2k_Pk_I}F'^{-1}S^T\varphi[t])$ is a feasible solution. To prove this, we show that if the first equality in~\eqref{eqn:simplifyeq} holds at time $t$, then it will also hold at time $t+1$. Then a simple induction can be conducted and the proof could thus be obtained. We check that:
\begin{eqnarray}
2k_Pk_IF' \hat g_2[t+1] &\overset{\text{(a)}}{=}& 2k_Pk_IF'G'\hat g_2[t] \\ \notag
&\overset{\text{(b)}}{=}& 2k_Pk_IS^TFSS^TGS\hat g_2[t] \\ \notag
&\overset{\text{(c)}}{=}& 2k_Pk_IS^TFGS\hat g_2[t] \\ \notag
&\overset{\text{(d)}}{=}& 2k_Pk_IS^TGFS\hat g_2[t] \\ \notag
&\overset{\text{(e)}}{=}& 2k_Pk_IS^TGSS^TFS\hat g_2[t] \\ \notag
&\overset{\text{(f)}}{=}& 2k_Pk_IG'F'\hat g_2[t] \\ \notag
&\overset{\text{(g)}}{=}& G'S^T\varphi[t] \\ \notag
&\overset{\text{(h)}}{=}& S^TGSS^T\varphi[t] \\ \notag
&\overset{\text{(i)}}{=}& S^T\varphi[t+1],
\end{eqnarray}
where in steps (c)(e)(h) we repeatedly used the inequalities $S^TGSS^T=S^TG$ and $S^TFSS^T=S^TF$ which could be obtained by following a similar argument as in the proof of Proposition~\ref{prop:avg}, and step (d) is true since by additional Assumption 2 matrices $F$ and $G$ commute. This finishes the proof that $(0,\hat g_2(t))$ is a solution of Equation~\eqref{eqn:consensus}.

We now prove that $B$ can be made stable by suitably choosing $k_I$ and $k_P$. First of all, we show that $G'=S^TGS$ and $F'=S^TFS$ are also simultaneously block diagonalizable and each pair of corresponding blocks has the property described in additional Assumption 2. To see why this is true, consider the invertible matrix $P'=\begin{bmatrix}R^T \\ S^T\end{bmatrix}P$. We observe that $P'$ simultaneously block-diagonalizes $\begin{bmatrix}R^T \\ S^T\end{bmatrix}G\begin{bmatrix}R S\end{bmatrix}$ and $\begin{bmatrix}R^T \\ S^T\end{bmatrix}F\begin{bmatrix}R S\end{bmatrix}$ if $P$ simultaneously block-diagonalizes $F$ and $G$. Moreover, since $R^TGS=0$ and $S^TGR=0$, both $\begin{bmatrix}R^T \\ S^T\end{bmatrix}G\begin{bmatrix}R S\end{bmatrix}$ and $\begin{bmatrix}R^T \\ S^T\end{bmatrix}F\begin{bmatrix}R S\end{bmatrix}$ are block diagonal matrices which yields that $P'$ must also be a block diagonal matrix, with its bottom-right block $P_2$ at the size of $S^TGS$ simultaneously diagonalizing $S^TGS$ and $S^TFS$. 

We then resort to the matrix $P_2$ to continue our proof. We consider the following similarity transformation: $$B'=\begin{bmatrix}B_{11}&B_{12}\\B_{21}&B_{22}\end{bmatrix}=\begin{bmatrix}P_2^{-1}&0\\0&P_2^{-1}\end{bmatrix}B\begin{bmatrix}P_2&0\\0&P_2\end{bmatrix},$$ under which the eigenvalues of $B$ are invariant. Simple computations show that $B_{11}=\text{diag}\{-\Lambda_l-2k_I^2\lambda_l^2I,\dots,-\Lambda_r-2k_I^2\lambda_r^2I\}$, $B_{21}=\text{diag}\{k_I\lambda_lI,\dots,k_I\lambda_rI\}$,
$B_{12}=\text{diag}\{-2k_Pk_I\lambda_lI,\dots,-2k_Pk_I\lambda_rI\}$,  and $B_{22}=\text{diag}\{\Lambda_l,\dots,\Lambda_r\}$ for some $l<r$. The block diagonal structure of each block in $B'$ yields the fact that the eigenvalues of both $B'$ and $B$ are the eigenvalues of the smaller matrices:
\begin{equation}
    \label{eqn:particle}
    \begin{bmatrix}\Lambda_i-I-2k_I^2\lambda_j^2I & -2k_Pk_I\lambda_jI \\ k_I\lambda_jI & \Lambda_i\end{bmatrix},\quad i,j\in\{l,\dots,r\},
\end{equation}
where any eigenvalue of any $\Lambda_i$ is an eigenvalue of $G$, and $\lambda_j>0$ is a non-zero eigenvalue of $F$. Here we invoke Lemma~\ref{lemma:hurwitzset} and conclude that if the matrix $G$ satisfy the same constraint on its eigenvalues in Assumption~\ref{assum6}, and $\lambda$ being an eigenvalue of $S^T\mathcal{L}S$, both of which are true by additional Assumption (2), then there exists a universal choice of $k_I$ and $k_P$ such that for any feasible choice of $\lambda$, the matrix $B$ is stable.

In conclusion, by properly choosing $k_I$ and $k_P$, $z_2[t]$ converges to $0$ exponentially fast.
\end{proof}

Lastly, we go back to prove Theorem~\ref{thm:main}. 
\begin{proof}
The dynamic average tracking algorithm in~\eqref{eqn:localestimate} can be compactly written in the form:

\begin{equation}
\label{eqn:globalestimate}
\left\{
\begin{aligned}
W^1[t+1] & = && (I_p\otimes \hat A-I_{np})W^1[t] + 2(I_p\otimes \hat A)\phi^1[t] \\
&&&  - 2k_I(\mathcal{L}\otimes I_n)\eta[t], \\
b[t+1] & = && (I_p\otimes \hat A)b[t] + k_I(\mathcal{L}\otimes I_n)W^1[t],\\
\eta[t] & = && k_Pb[t]+ k_I(\mathcal{L}\otimes I_n)W^1[t],
\end{aligned}
\right.
\end{equation}
where $W^1 = \begin{bmatrix}(W_1^1)^T & (W_2^1)^T & \dots & (W_p^1)^T\end{bmatrix}^T$ and $\phi^1=\begin{bmatrix}(\phi_1^1)^T & \dots & (\phi_p^1)^T\end{bmatrix}^T$ are obtained by concatenating estimates and weighted measurements from all nodes, $b = \begin{bmatrix}b_1^T & b_2^T & \dots & b_p^T\end{bmatrix}^T$ and $\eta = \begin{bmatrix}\eta_1^T & \eta_2^T & \dots & \eta_p^T\end{bmatrix}^T$ are similarly defined. Recall that $\mathcal{L}$ is the Laplacian of the graph and $\otimes$ denotes the kronecker product. It is not hard to check that system~\eqref{eqn:globalestimate} falls into the class of systems of the form~\eqref{eqn:genestimate} by choosing $G=I_p\otimes \hat A$ all of whose eigenvalues have positive real parts, $F=F^T=\mathcal{L}\otimes I_n\succeq 0$, and $a=np$. We also choose $\varphi=\phi^1$ and we note that $ \phi_i^1[t+1]=\hat A\phi_i^1[t]$ with $i$ ranging from $1$ to $p$.

Next we show that all the additional Assumptions we made in this section are satisfied. 

First of all, we note that $G=I_p\otimes \hat A$ and $F=\mathcal{L}\otimes I_n$, therefore all eigenvalues of $G$ coincide with the eigenvalues of $A$ and all eigenvalues of $F$ coincide with the eigenvalues of $\mathcal{L}$. Moreover, since $\phi_i^1[t+1]=\hat A\phi_i^1[t]$ for any $i$, we have $\varphi[t+1]=(I_p\otimes \hat A)\varphi[t]=G\phi[t]$. To this end, we only need to check if Assumptions 1(b) and 3 are satisfied, which will be argued in the following.

\begin{enumerate}
    \item \textbf{Checking Additional Assumption 1(b):} We pick matrix $R=\frac{1}{\sqrt{p}}\mathbf{1}_p$ and a matrix $S$ of suitable dimension such that $\begin{bmatrix}R & S\end{bmatrix}$ is an orthogonal matrix. We note that the conditions $\mathcal{R}(R)=\mathrm{ker}F$ and $\mathcal{R}(S)\cap\mathrm{ker}F=\{0\}$ follow from the construction of $R$ and $S$. Also, recall that $\mathcal{L}$ is the Laplacian of the communication graph and the only eigenvector corresponding to the zero eigenvalue is $\mathbf{1}_p$. This, together with the construction of $R$, imply that the matrix $\begin{bmatrix}R & S\end{bmatrix}\otimes I_n$ is an orthogonal matrix with $R\otimes I_n$ lying in the kernel of $(\mathcal{L}\otimes I_n)$. Moreover, we can check by direct computation that the following equalities hold:
\begin{eqnarray*}
\begin{aligned}
    \hspace{1.8cm}(R^T\otimes I_n)G(S\otimes I_n)=(R^T\otimes I_n)(I_p\otimes \hat A)(S\otimes I_n) = R^TS\otimes \hat A=0, \\
    \hspace{1.8cm}(S^T\otimes I_n)G(R\otimes I_n)=(S^T\otimes I_n)(I_p\otimes \hat A)(R\otimes I_n) = S^TR\otimes \hat A=0.
\end{aligned}
\end{eqnarray*}
    
    \item \textbf{Checking Additional Assumption 3:} Since the graph is undirected, and hence $\mathcal{L}=\mathcal{L}^T$ holds, we can diagonalize  $\mathcal{L}$ with a matrix $T$, i.e.,  $T^{-1}\mathcal{L}T$ is a diagonal matrix with eigenvalues $\lambda_1,\dots,\lambda_r$. To show that the additional Assumption 3 is also satisfied, we directly compute the following two transformations: $(T\otimes I_n)^{-1}(\mathcal{L}\otimes I_n)(T\otimes I_n)=\text{diag}\{\lambda_1I_n,\dots,\lambda_rI_n\}$ and $(T\otimes I_n)^{-1}(I_p\otimes \hat A)(T\otimes I_n)=I_p\otimes \hat A=\text{diag}\{\hat A,\dots,\hat A\}$. This calculation shows that $F$ and $G$ are simultaneously diagonalizable and each block in $G$ corresponds to an identity block in $F$, after diagonalization, up to a scalar $\lambda_i$, hence the additional Assumption 3 is also satisfied.
    
\end{enumerate}

Since the state observer we designed in~\eqref{eqn:globalestimate} falls into the class~\eqref{eqn:genestimate} and satisfies all the additional assumptions, we can apply Propositions~\ref{prop:avg} and~\ref{prop:cons} to analyze the trajectory of $W_i^1$s in Equation~\eqref{eqn:localestimate}. By Proposition~\ref{prop:avg} we conclude that $(R^T\otimes I_n)W^1=\frac{1}{\sqrt p}\sum_iW_i^1$ converges to $(R^T\otimes I_n)\varphi=\frac{1}{\sqrt p}\sum_i\phi_i^1$ exponentially fast. By Proposition~\ref{prop:cons} we conclude that $(S^T\otimes I_n)W^1$ converges to $0$ exponentially fast, which shows the difference between any two estimates in two nodes decays to zero exponentially fast. Combining both we obtain that any $W_i$ converges to $\frac{1}{p}\sum_i\phi_i^1$ and hence Theorem~\ref{thm:main} is proved. 
\end{proof}

\subsection{Reconstructing the State with Compressed Measurements}

At this point, we know that if each node in the network executes the tracking algorithm \eqref{eqn:localestimate} corresponding to each set of weights $\{d_{ji}\}$, then its estimate vector $W_i$ converges to $(D\otimes I_n)Y$ exponentially fast. The missing piece now is a decoding algorithm which enables each node to reconstruct the state estimate $\hat x_i[t]$ from the compressed measurement $W_i[t]$. In the remainder of this section we argue that if $W_i[t]$ is close enough to $(D\otimes I_n)Y[t]$ then reconstructing the state $x[t]$ is nothing more than solving a standard SSR problem. Once again, for the sake of simplicity we are not going to provide the rigorous definition of the SSR problem in this paper, which can be found in~\cite{mao19}.

In the reconstruction step, each node seeks a state $\hat x\in\mathbb{R}^n$ and a vector $\hat E\in\mathbb{R}^{pn}$ containing no more than $s$ non-zero blocks that best explain the tracked compressed measurements $W_i$. This problem can be solved in different ways. In this paper we adopt an optimization formulation by requiring each node to minimize $\Vert W_i-(D\otimes I_n)(\mathcal{O} \hat x+\hat E)\Vert_2$. Note that we dropped the time index $t$ here to simplify notation. In other words, each node solves the following optimization problem:
\begin{eqnarray}
\label{eqn:opti} \notag
(\hat x,\hat E)=& \text{argmin}_{(\tilde x,\tilde E)} &\Vert W_i-(D\otimes I_n)(\mathcal{O} \tilde x+\tilde E)\Vert_2, \\ 
& s.t. &\Vert \tilde E\Vert_{l_0/ l_r}\leq s,
\end{eqnarray}
where $\Vert \tilde E\Vert_{l_0/ l_r}$ denotes the number of non-zero blocks of the vector $\tilde E$. The following lemma states that the solution $\hat x$ of \eqref{eqn:opti} is a good estimate of the system state when the tracking of the compressed measurements is accurate enough.

\begin{lemma}
\label{lemma:converge}

Let $D\in\mathbb{R}^{v\times p}$ and assume $(A,C)$ is $2s$-sparse detectable with respect to $D$. There exists a constant $\beta\in (0,+\infty)$ such that for any $\alpha \in (0,+\infty)$, $x\in\mathbb{R}^n$, $W_i\in\mathbb{R}^{vn}$, and $Y\in\mathbb{R}^{pn}$ satisfying $Y=\mathcal{O}x$ and $\Vert W_i- (D\otimes I_n)Y\Vert_2\leq \alpha$, the solution $\hat x$ to the optimization problem \eqref{eqn:opti} satisfies $\Vert\hat x- x\Vert_2\leq \beta\alpha.$
\end{lemma}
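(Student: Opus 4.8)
\textit{Proof proposal.}
The plan is to reduce the lemma to a single purely algebraic inequality: I will produce a constant $\gamma>0$ depending only on $A$, $C$, $D$ and $s$ such that $\Vert(D\otimes I_n)(\mathcal{O}z+w)\Vert_2\ge\gamma\Vert z\Vert_2$ for every $z\in\mathbb{R}^n$ and every $w\in\mathbb{R}^{pn}$ having at most $2s$ non-zero blocks; the lemma then holds with $\beta=2/\gamma$. Indeed, $Y=\mathcal{O}x$ makes $(\tilde x,\tilde E)=(x,0)$ feasible for~\eqref{eqn:opti}, with objective value $\Vert W_i-(D\otimes I_n)\mathcal{O}x\Vert_2=\Vert W_i-(D\otimes I_n)Y\Vert_2\le\alpha$, so the minimizer $(\hat x,\hat E)$ satisfies $\Vert W_i-(D\otimes I_n)(\mathcal{O}\hat x+\hat E)\Vert_2\le\alpha$; combining this with $\Vert W_i-(D\otimes I_n)\mathcal{O}x\Vert_2\le\alpha$ by the triangle inequality gives $\Vert(D\otimes I_n)(\mathcal{O}(\hat x-x)+\hat E)\Vert_2\le 2\alpha$, and since $\hat E$ has at most $s\le 2s$ non-zero blocks, the inequality above applied with $z=\hat x-x$ and $w=\hat E$ yields $\gamma\Vert\hat x-x\Vert_2\le 2\alpha$.

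The first step toward $\gamma$ is a noiseless injectivity statement: for every $T\subseteq P$ with $|T|\le 2s$, if $(D\otimes I_n)(\mathcal{O}z+w)=0$ for some $w$ whose non-zero blocks are indexed by $T$, then $z=0$. To prove it, choose $L\in\mathbf{P}_{2s}$ with $\mathrm{ker}(L)=D\big(\mathrm{span}\{\mathbf{e}_i:i\in T\}\big)$ and left-multiply the identity by $L\otimes I_n$; since $LD\mathbf{e}_i=0$ for each $i\in T$, the term $((LD)\otimes I_n)w$ vanishes, leaving $((LD)\otimes I_n)\mathcal{O}z=0$. A direct computation shows that $((LD)\otimes I_n)\mathcal{O}$ is, up to a permutation of its rows, the observability matrix of the pair $(A,LDC)$, so $z$ lies in the unobservable subspace of $(A,LDC)$. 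By the convention adopted in Section~\ref{sec:reduc} we track only the state component associated with the unstable eigenvalues of $A$, on which $A$ has no stable eigenvalue; since $(A,C)$ is $2s$-sparse detectable with respect to $D$, the pair $(A,LDC)$ is detectable and hence, in the absence of stable modes, observable, which forces $z=0$.

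It remains to make this injectivity quantitative and uniform over the finitely many admissible supports, which is the only genuinely nontrivial part. Fix $T$ with $|T|\le 2s$, let $\mathcal{W}_T$ be the set of $w\in\mathbb{R}^{pn}$ whose non-zero blocks are indexed by $T$, and let $V_T=(D\otimes I_n)\mathcal{W}_T$, a fixed linear subspace; then $\min_{w\in\mathcal{W}_T}\Vert(D\otimes I_n)(\mathcal{O}z+w)\Vert_2=\Vert N_T z\Vert_2$, where $N_T$ is the linear map sending $z$ to the orthogonal projection of $(D\otimes I_n)\mathcal{O}z$ onto $V_T^{\perp}$. The previous paragraph shows $N_T z=0$ implies $z=0$, so $N_T$ is injective and $\gamma_T:=\min_{\Vert z\Vert_2=1}\Vert N_T z\Vert_2>0$ (a continuous, nowhere-zero function minimized over a compact sphere). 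There are only finitely many index sets $T$ with $|T|\le 2s$, so $\gamma:=\min_T\gamma_T>0$, and for arbitrary $w$ with at most $2s$ non-zero blocks, taking $T=\mathrm{supp}(w)$ gives $\Vert(D\otimes I_n)(\mathcal{O}z+w)\Vert_2\ge\Vert N_T z\Vert_2\ge\gamma\Vert z\Vert_2$, as needed. The main obstacle here is not any individual estimate but precisely this bookkeeping that turns a family of injectivity statements into one Lipschitz constant valid for all attack patterns; as anticipated in Section~\ref{sec:DAC}, one could instead invoke an off-the-shelf robustness result for the standard SSR problem after folding the slack variable $\hat E$ into the attack, but the self-contained route above is arguably cleaner.
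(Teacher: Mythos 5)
Your proposal is correct, and its skeleton matches the paper's proof: feasibility of the true pair plus optimality of $(\hat x,\hat E)$, the triangle inequality giving $\Vert (D\otimes I_n)(\mathcal{O}(\hat x-x)+\hat E)\Vert_2\le 2\alpha$, annihilation of the sparse term by some $L\in\mathbf{P}_{2s}$, and full column rank of $(L\otimes I_n)(D\otimes I_n)\mathcal{O}$ deduced from $2s$-sparse detectability with respect to $D$ (both you and the paper need the convention of Section~\ref{sec:reduc} — restricting to the unstable modes of $A$ — to pass from detectability to observability; you make this explicit, the paper leaves it implicit). Where you genuinely diverge is the quantitative step: the paper normalizes $\Vert L\Vert_2=1$ and sets $\beta^{-1}=\frac{1}{2}\sigma_{\min}\bigl((L\otimes I_n)(D\otimes I_n)\mathcal{O}\bigr)$ for the $L$ tied to the realized support $\mathcal{K}$, which strictly speaking makes $\beta$ depend on $\mathcal{K}$ and requires an unstated maximization over supports to match the lemma's quantifier order ($\beta$ first, then all $W_i,x,Y$); you instead lower-bound the residual by the projection of $(D\otimes I_n)\mathcal{O}z$ onto $V_T^{\perp}$ and take the minimum of $\gamma_T$ over the finitely many block supports $T$ with $|T|\le 2s$, so your $\beta=2/\gamma$ visibly depends only on $A,C,D,s$. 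Your route buys this explicit uniformity (and, per support, the sharpest constant, since the projection realizes the exact distance to $V_T$), at the cost of a slightly longer argument; the paper's route is shorter and gives a computable bound directly in terms of a minimum singular value, but needs the finite-support worst case added to be fully rigorous as stated. One cosmetic remark: since the lemma assumes $Y=\mathcal{O}x$, your use of supports of size up to $2s$ is only needed if you also want the constant to cover the attacked case $Y=\mathcal{O}x+E$ (as the paper's own proof, which introduces $E$ and $\mathcal{K}_1$, evidently intends); for the statement as written, $s$-block supports would suffice, so keeping $2s$ is harmless and in fact better aligned with how the lemma is used.
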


\begin{proof}
The solution $(\hat x, \hat E)$ to the optimization problem \eqref{eqn:opti} satisfies the following set of inequalities: 
\begin{eqnarray*}
&&\Vert W_i-(D\otimes I_n)(\mathcal{O} \hat x+\hat E)\Vert_2 \\
&\leq& \Vert W_i-(D\otimes I_n)(\mathcal{O} x+E)\Vert_2 \\
&\leq& \alpha,
\end{eqnarray*}
which implies:
\begin{equation}
\Vert (D\otimes I_n)(\mathcal{O}(\hat x-x)+(\hat E-E))\Vert_2\leq 2\alpha.
\end{equation}
Moreover, let $\mathcal{K}_1$ be the index set of non-zero blocks in $E$, $\mathcal{K}_2$ be the index set of non-zero blocks in $\hat E$, and $\mathcal{K}=\mathcal{K}_1\cup \mathcal{K}_2.$ Note that $|\mathcal{K}|\leq 2s$. We pick $L\in\mathbf{P}_{2s}$ such that\footnote{The definition of $\mathbf{P}_{2s}$ can be found in the definition of sparse detectability with respect to a matrix in Section \ref{sec:designD}, and the norm condition can be satisfied by a proper normalization.} $\mathrm{ker}(L)=\text{span}\{\mathbf{e}_i,i\in\mathcal{K}\}$ and $\Vert L\Vert_2=1$. We observe that the following inequality holds:
\begin{equation}
\label{eqn:lastconflict}
\Vert ((L\otimes I_n)(D\otimes I_n)(\mathcal{O}(\hat x-x))\Vert_2\leq 2\alpha.
\end{equation}
By assumption $(A,C)$ is $2s$-sparse detectable with respect to $D$, it follows that the matrix $(L\otimes I_n)(D\otimes I_n)\mathcal{O}$ has full column rank. In other words, we can pick $\beta^{-1}=\frac{1}{2}\sigma_{\min}((L\otimes I_n)(D\otimes I_n)\mathcal{O})>0$ and then Equation \eqref{eqn:lastconflict} directly implies $\Vert \hat x-x\Vert_2\leq \beta\alpha$, which finishes our proof. 

\end{proof}

As a special case, when $W_i=(D\otimes I_n)Y$, the optimization problem in \eqref{eqn:opti} degenerates to
the following equality: $$(D\otimes I_n)Y=(D\otimes I_n)\mathcal{O} \hat x_i+(D\otimes I_n)\hat E_i.$$ Solving this equality is equivalent to finding the state $\hat x_i$, a vector $T\in\mathrm{ker} (D\otimes I_n)$ and a vector $\hat E_i$ containing no more than $s$ non-zero blocks such that: $$Y=\mathcal{O}\hat x_i+\hat E_i+T,$$ holds. Now we define a matrix $N$ which has $np$ rows and the least possible number of columns such that $\mathcal{R}(N)=\mathrm{ker}(D\otimes{I_n})$, and the optimization problem \eqref{eqn:opti} is further reduced into finding the state $\hat x_i$, a vector $\hat r_i$ of appropriate dimension and a vector $\hat E_i$ containing no more than $s$ non-zero blocks satisfying: $$Y=\mathcal{O}\hat x_i+\hat E_i+N\hat r_i = \begin{bmatrix}\mathcal{O} & N \end{bmatrix}\begin{bmatrix}\hat x_i \\ \hat r_i\end{bmatrix}+\hat E_i.$$ Hereby we conclude that reconstructing the state $x$ from compressed measurements $(D\otimes I_n)Y$ is a special case of the SSR problem, and any algorithmic solution to the SSR problem, for example~\cite{shoukry2018smt} and \cite{mishra2016secure}, can be applied to solve the DSST problem if they do not require additional assumptions.

The preceding discussion directly leads to the sufficiency result:

\begin{lemma}
\label{thm:mainthm}
Consider the linear system subject to attacks defined in~\eqref{eqn:sys} satisfying Assumptions~\ref{assum2}-\ref{assum6} and a communication network satisfying Assumption \ref{assum1}. The tracking algorithm~\eqref{eqn:localestimate} together with a state-reconstruction algorithm enables each node to asymptotically track the state of the system~\eqref{eqn:sys} and consequently solves the DSST problem, if the pair $(A,C)$ is 2s-sparse detectable.
\end{lemma}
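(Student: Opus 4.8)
The plan is to assemble the lemma from the two technical results already in hand: Theorem~\ref{thm:main}, which gives every node exponentially fast tracking of the compressed measurements, and Lemma~\ref{lemma:converge}, which makes the state reconstructed from those measurements depend Lipschitz-continuously on the tracking error. The only genuinely new content is the bookkeeping that turns these into a bound on the state estimate $\hat x_i[t]$.

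First I would fix a compression matrix $D\in\mathbb{R}^{v\times p}$ for which $(A,C)$ is $2s$-sparse detectable with respect to $D$; the hypothesis guarantees such a $D$ exists, since by Remark~\ref{remark:sparseob} the choice $D=I_p$ always qualifies (Section~\ref{sec:designD} discusses when a strictly smaller $v$ is admissible). Each node runs $v$ independent copies of~\eqref{eqn:localestimate}, the $j$-th copy driven by $\phi_i^j[t]=d_{ji}Y_i[t]$. The point that must be checked before invoking Theorem~\ref{thm:main} is that these inputs satisfy $\phi_i^j[t+1]=\hat A\phi_i^j[t]$: for attack-free nodes this is exactly~\eqref{eqn:consis}, while for attacked nodes it is the consistency relation imposed by the local sanity check, which every node executes (Assumption~\ref{assum4}) and which any undetected attack must therefore respect. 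With Assumptions~\ref{assum1} and~\ref{assum6} in force, Theorem~\ref{thm:main} applied to each copy and stacked over $j$ yields, uniformly in $i$, a bound $\Vert pW_i[t]-(D\otimes I_n)Y[t]\Vert\le \beta_1\alpha_1^{t}$ for some $0<\alpha_1<1$ and $\beta_1>0$ (the factor $p$, known to all nodes by Assumption~\ref{assum3}, merely rescales the estimate before decoding).

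Second, each node feeds $pW_i[t]$ into the decoder~\eqref{eqn:opti}. Here I would reuse the argument in the proof of Lemma~\ref{lemma:converge}, but with the feasible point $(x[t],E[t])$ in place of $(x[t],0)$ — legitimate because Assumption~\ref{assum2} makes the true attack $E[t]$ at most $s$-block-sparse — and restrict, as licensed by the Remark following~\eqref{eqn:consis}, to the unstable part of the state, on which sparse detectability with respect to $D$ coincides with sparse observability with respect to $D$. This gives $\Vert\hat x_i[t]-x[t]\Vert\le \beta_2\,\Vert pW_i[t]-(D\otimes I_n)Y[t]\Vert$, where $\beta_2$ is the reciprocal of the minimum of $\tfrac{1}{2}\sigma_{\min}\bigl((L\otimes I_n)(D\otimes I_n)\mathcal{O}\bigr)$ over the finitely many $L\in\mathbf{P}_{2s}$ that annihilate a joint support of size at most $2s$; this minimum is strictly positive, and hence $\beta_2<\infty$, precisely because $(A,C)$ is $2s$-sparse detectable with respect to $D$. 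The stable modes of $x[t]$ decay on their own and contribute only a vanishing term.

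Combining the two estimates gives $\Vert\hat x_i[t]-x[t]\Vert\le \beta_1\beta_2\,\alpha_1^{t}\to 0$ at every node — in fact exponentially — so~\eqref{eqn:localestimate} together with~\eqref{eqn:opti} (equivalently, any SSR solver applied to the slack-variable reformulation of~\eqref{eqn:opti} described after Lemma~\ref{lemma:converge}) solves the DSST problem. The step I expect to require the most care is not any single bound but keeping $\beta_2$ genuinely independent of $t$: the optimal support $\mathcal{K}=\text{supp}(\hat E_i[t])\cup\text{supp}(E[t])$ varies with time and the compression reshapes the attack model (the slack term $N\hat r_i$), and both must be folded into a single finite minimum of singular values — which is exactly where $2s$-sparse detectability \emph{with respect to $D$}, rather than plain detectability of $(A,C)$, does the work. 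Treating the stable subspace cleanly is the only other subtlety, and the standing reduction to the unstable modes disposes of it.
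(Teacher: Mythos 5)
Your proposal is correct and follows essentially the same route as the paper, which proves this lemma simply by combining Theorem~\ref{thm:main} (exponential tracking of the compressed measurements, run in $v$ parallel copies) with Lemma~\ref{lemma:converge} (Lipschitz dependence of the decoded state on the tracking error) and the slack-variable reduction of~\eqref{eqn:opti} to an SSR instance. In fact you make explicit two points the paper leaves implicit --- the uniformity of the reconstruction constant over the finitely many supports $L\in\mathbf{P}_{2s}$ and the restriction to the unstable modes where detectability with respect to $D$ acts as observability --- so no gap remains.
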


\section{Concluding Remarks}
\label{sec:conclusion}
In this paper, we studied the problem of tracking the state of a linear system against sensor attacks in a network. Motivated by recent works in the dynamic average consensus literature and the simple observation that a compressed version of measurements suffices to reconstruct the state, we proposed a novel decentralized observer that enables each node in the network to track the state of the linear system without making any non-necessary assumptions except one regarding the sampling rate.

Future directions include developing a decentralized state-tracking algorithm robust against other types of adversarial attacks, for example, a byzantine attack which is not only able to alter measurements but also forces attacked nodes to deviate from their prescribed algorithm. One additional problem of interest is the extension of the proposed results to the case where inputs are present but unknown to the nodes.

\bibliographystyle{IEEEtran}
\bibliography{references}

\end{document}